\DeclareMathOperator{\Suff}{Suff}
\DeclareMathOperator{\Pref}{Pref}
\DeclareMathOperator{\Fact}{Fact}
\DeclareMathOperator{\PAL}{PAL}
\renewcommand{\epsilon}{\varepsilon}
\newcommand{\LL}{\mathcal{L}}
\begin{document}

%%%%%%%%%%%%%%%%%%%%%%%%%%%%%%%%%%%%%%%%%%%%%%%%%%%%%%%%%%%%%%%%%%%%%%%%%%%%%%%%%%%%%%%%%%%
\title{On the Number of Closed Factors in a Word}
%%%%%%%%%%%%%%%%%%%%%%%%%%%%%%%%%%%%%%%%%%%%%%%%%%%%%%%%%%%%%%%%%%%%%%%%%%%%%%%%%%%%%%%%%%%

\sloppy

\date{\today}

\author{Golnaz Badkobeh\inst{1} \and Gabriele Fici\inst{2} \fnmsep \thanks{Partially supported by Italian MIUR Project PRIN~2010LYA9RH, ``Automi e Linguaggi Formali: Aspetti Matematici e Applicativi''.} \and Zsuzsanna Lipt\'ak\inst{3}}

\institute{
Department of Computer Science, University of Sheffield, UK\\ \email{g.badkobeh@sheffield.ac.uk}  \and
Dipartimento di Matematica e Informatica, Universit\`a di Palermo, Italy\\
\email{gabriele.fici@unipa.it} \and
Dipartimento di Informatica, Universit\`a di Verona, Italy \\
\email{zsuzsanna.liptak@univr.it}
}

\toctitle{On the Number of Closed Factors in a Word}
\tocauthor{Golnaz~Badkobeh, Gabriele~Fici, and Zsuzsanna~Lipt\'ak}
\maketitle

\setcounter{footnote}{0}

\begin{abstract}
A closed word (a.k.a. periodic-like word or complete first return) is a word whose longest border does not have internal occurrences, or, equivalently, whose longest repeated prefix is not right special. We investigate the structure of closed factors of words. We show that a word of length $n$ contains at least $n+1$ distinct closed factors, and characterize those words having exactly $n+1$ closed factors. Furthermore, we show that a word of length $n$ can contain $\Theta(n^{2})$ many distinct closed factors.
\keywords{Combinatorics on words, Closed word, complete return, rich word, bitonic word.}
\end{abstract}

%%%%%%%%%%%%%%%%%%%%%%%%%%%%%%%%%%%%%%%%%%%%%%%%%%%%%%%%%%%%%%%%%%%%%%%%%%%%%%%%%%%%%%%%%%%%%%
\section*{Introduction}
%%%%%%%%%%%%%%%%%%%%%%%%%%%%%%%%%%%%%%%%%%%%%%%%%%%%%%%%%%%%%%%%%%%%%%%%%%%%%%%%%%%%%%%%%%%%%%

It is  known (see for example \cite{DrJuPi01}) that any word $w$ of length $n$ contains at most $n+1$ palindromic factors.
Triggered by this result, several researchers initiated a study to characterize words
that can accommodate a maximal number of palindromes, called \emph{rich} (or \emph{full}) \emph{words} (see, for example, \cite{BrHaNiRe04,GlJuWiZa09,BuDelDel09,BuDelGlZa09,ReRo09}).

In this paper, we consider the notion of \emph{closed word} (a.k.a.~periodic-like word or complete first return). A word $w$ is  closed if and only if it is empty or has a factor $v\neq w$ occurring exactly twice in $w$, as a prefix and as a suffix of $w$. We also say in this case that $w$ is a complete return to $v$.
For example, $aaa$, $ababa$, $ccabcc$ are all  closed words (they are complete returns to $aa$, $aba$ and $cc$, respectively), while $ab$ and $abaabab$ are not. As shown in Proposition \ref{prop:exp}, any word whose exponent is at least two is closed.

The  \emph{closed factors} of a word are its factors that are  closed words. In contrast to the case of palindromic factors, we show that a word of length $n$ contains at least $n+1$  closed factors (Lemma \ref{lem:bound}). Inspired by this property, we study the class of words that contain the smallest number of  closed factors, and we call them \emph{CR-poor words}.

As an example, $abca$ is a CR-poor word, since it has length $4$ and exactly $5$  closed factors, namely $\epsilon, a, b, c$ and $abca$, whereas the word $ababa$ is not CR-poor since it has length $5$ but contains  $8$  closed factors: $\epsilon$, $a$, $b$, $aba$, $bab$, $abab$, $baba$ and $ababa$. 

However, there is some relation between rich words and CR-poor words. Bucci, de Luca and De Luca~\cite{BuDelDel09} showed that a palindromic word is rich if and only if all of its palindromic factors are  closed. We show, in Proposition \ref{prop:rich}, that if a word $w$ has the property that all of its  closed factors are palindromes, then $w$ is a CR-poor word, and it is also rich. CR-poor words are also connected to some problems on \emph{privileged words} (see \cite{Pel13}).

 While having only palindromic  closed factors is a necessary and sufficient condition for a binary word to be CR-poor (Theorem \ref{theor:mainbin}), we prove that in a word $w$ over an alphabet $\Sigma$ of arbitrary cardinality, the set of  closed factors and the set of palindromic factors of $w$ coincide if and only if $w$ is both rich and CR-poor (Proposition \ref{prop:richpoor}).

In Theorem \ref{theor:main}, we give a combinatorial characterization of CR-poor words over an alphabet $\Sigma$ of cardinality greater than one: A word over $\Sigma$ is CR-poor if and only if it does not contain any  closed factor that is a  complete return to $xy$, for $x,y$ different letters in $\Sigma$. In other words, CR-poor words are exactly those words having as their closed factors only  complete returns to powers of a single letter. 
As a consequence, the language of CR-poor words over $\Sigma$ is a regular language. In contrast, the language of  closed words is not regular (Proposition \ref{prop:reg}).

We give some further characterizations of CR-poor words in the case of the binary alphabet (Theorem \ref{theor:mainbin}). One of them is that the binary CR-poor words are the \emph{bitonic words}, i.e., the conjugates to words in $a^{*}b^{*}$. We therefore have that binary CR-poor words form a regular subset of the language of rich words.

Finally, we show that a word of length $n$ can contain $\Theta(n^{2})$ many distinct closed factors (Theorem \ref{max}).

%%%%%%%%%%%%%%%%%%%%%%%%%%%%%%%%%%%%%%%%%%%%%%%%%%%%%%%%%%%%%%%%%%%%%%%%%%%%%%%%%%%%%%%%%%%%%%
\section{Closed Words}
%%%%%%%%%%%%%%%%%%%%%%%%%%%%%%%%%%%%%%%%%%%%%%%%%%%%%%%%%%%%%%%%%%%%%%%%%%%%%%%%%%%%%%%%%%%%%%

A \textit{word} is a finite sequence of elements from a finite set $\Sigma$. We refer to the elements of $\Sigma$ as \emph{letters} and to $\Sigma$ as the \emph{alphabet}.   The $i$-th letter of a word $w$ is denoted by $w_{i}$. Given a word $w=w_1w_2\cdots w_n$, with $w_i\in\Sigma$ for $1\leq i\leq n$, the nonnegative integer $n$ is the \emph{length} of $w$, denoted by $|w|$. The empty word has length zero and is denoted by $\varepsilon$. 
The set of all words over $\Sigma$ is denoted by $\Sigma^*$.  
Any subset of $\Sigma^*$ is called a \emph{language}. A language is \emph{regular} (or \emph{rational}) if it can be recognized by a finite state automaton.

A \emph{prefix} (resp.~a \emph{suffix}) of a word $w$ is any word $u$ such that $w=uz$ (resp.~$w=zu$) for some word $z$. A \emph{factor} of $w$ is a prefix of a suffix (or, equivalently, a suffix of a prefix) of $w$. The set of prefixes, suffixes and factors of the word $w$ are denoted  by $\Pref(w)$, $\Suff(w)$ and $\Fact(w)$ respectively. A \emph{border} of a word $w$ is any word in $\Pref(w)\cap \Suff(w)$ different from $w$. From the definitions, we have that $\epsilon$ is a prefix, a suffix, a border and a factor of any word.  
An \emph{occurrence} of a factor $u$ in $w$ is a factorization $w = vuz$. An occurrence of $u$ is \emph{internal} if both $v$ and $z$ are non-empty. 

The word $\tilde{w}=w_{n}w_{n-1}\cdots w_{1}$ is called the \emph{reversal} (or \emph{mirror image}) of $w$. A \emph{palindrome} is a word $w$ such that $\tilde{w}=w$.
In particular, the empty word is  a palindrome. A \emph{conjugate} of a word $w$ is any word of the form $vu$ such that $uv=w$, for some $u,v\in \Sigma^{*}$. A conjugate of a word $w$ is also called a \emph{rotation} of $w$. 

A \emph{period} for the word $w$ is a positive integer $p$, with $0<p\leq |w|$, such that
$w_{i}=w_{i+p}$ for every $i=1,\ldots ,|w|-p$. Since $|w|$ is always a period for $w$, we have that every non-empty word has at least one period. We can unambiguously define \textit{the} period of the word $w$ as the smallest of its periods.  
The \emph{exponent} of a word $w$ is the ratio between its length and its smallest period. A \emph{power} is a word whose exponent is an integer greater than $1$. A word that is not a power is called \emph{primitive}

We denote by $\PAL(w)$ the set of factors of $w$ that are palindromes. A word $w$ of length $n$ is \emph{rich}~\cite{GlJuWiZa09} (or \emph{full}~\cite{BrHaNiRe04})  if $|\PAL(w)|= n+1$, i.e., if it contains the largest number of palindromes a word of length $n$ can contain.

A language $L$ is called \emph{factorial} if $L=\Fact(L)$, i.e., if $L$ contains all the factors of its words. A language $L$ is \emph{extendible} if for every word $w\in L$, there exist letters $a,b\in \Sigma$ such that $awb\in L$. The language of rich words over a fixed alphabet $\Sigma$ is an example of a factorial and extendible language.

We recall the definition of  closed word given in \cite{Fi11}:

\begin{definition}\label{def: closed}
A word $w$ is \emph{closed} if and only if it is empty or has a factor $v\neq w$ occurring exactly twice in $w$, as a prefix and as a suffix of $w$.
\end{definition}

The word $aba$ is a  closed, since its factor $a$ appears in it only as a prefix and as a suffix. The word $abaa$, on the contrary, is not  closed. Note that for any letter $a\in \Sigma$ and for any integer $n>0$, the word $a^{n}$ is  closed, $a^{n-1}$ being a factor occurring only as a prefix and as a suffix in it (this includes the special case of single letters, for which $n=1$ and $a^{n-1}=\epsilon$). 

\begin{remark}
The notion of  closed word is equivalent to that of \emph{periodic-like} word \cite{CaDel01a}. A word $w$ is periodic-like if its longest repeated prefix does not have two occurrences in $w$ followed by different letters, i.e., if its longest repeated prefix is not right special.

The notion of  closed word is also closely related to the concept of \emph{complete return} to a factor, as considered in \cite{GlJuWiZa09}. A complete return to the factor $u$ in a word $w$ is any factor of $w$ having exactly two occurrences of $u$, one as a prefix and one as a suffix. Hence a non-empty word $w$ is  closed if and only if it is a complete return to one of its factors; such a factor is clearly both the longest repeated prefix and the longest repeated suffix of $w$ (i.e., the longest border of $w$). 
\end{remark}
 
\begin{remark}\label{obs}
Let $w$ be a non-empty word over $\Sigma$. The following characterizations of  closed words follow easily from the definition:
 
\begin{enumerate}
 \item $w$ has a factor $v\neq w$ occurring exactly twice in $w$, as a prefix and as a suffix of $w$; 
 \item the longest repeated prefix (resp.~suffix) of $w$ does not have internal occurrences in $w$, i.e., occurs in $w$ only as a prefix and as a suffix;
  \item the longest repeated prefix (resp.~suffix) of $w$ does not have two occurrences in $w$ followed (resp.~preceded) by different letters;
 \item $w$ has a border that does not have internal occurrences in $w$;
 \item the longest border of $w$ does not have internal occurrences in $w$;
 \item $w$ is a complete return to its longest repeated prefix;
 \item $w$ is a complete return to its longest border.
\end{enumerate}
\end{remark}

For more details on  closed words and related results see~\cite{CaDel01a,BuDelDel09,Fi11,BuDelFi13,DelFi13,BaBa14,slo}.

We end this section by exhibiting some properties of closed words.

\begin{proposition}\label{prop:exp}
Any word whose exponent is at least $2$ is closed.
\end{proposition}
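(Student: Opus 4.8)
The plan is to exhibit an explicit nonempty factor $v\neq w$ that occurs exactly twice in $w$ — once as a prefix and once as a suffix — which by Definition~\ref{def: closed} certifies that $w$ is closed. Write $n=|w|$ and let $p$ be the period of $w$; the hypothesis on the exponent means exactly that $n\ge 2p$. The natural candidate is $v=w_{1}w_{2}\cdots w_{n-p}$, the prefix of $w$ of length $n-p$. First I would dispose of the easy parts: since $p$ is a period, $w_{j}=w_{j+p}$ for all $1\le j\le n-p$, so $v$ is also a suffix of $w$; and $|v|=n-p\ge p\ge 1$, so $v$ is nonempty, strictly shorter than $w$, and its occurrences as a prefix and as a suffix are distinct. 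Thus everything reduces to showing that $v$ has no internal occurrence in $w$.

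Suppose, for contradiction, that $w=s\,v\,t$ is an internal occurrence, i.e.\ $s$ and $t$ are both nonempty. Since $|s|+|t|=n-|v|=p$, putting $i=|s|$ we get $1\le i\le p-1$; and the occurrence says that $w_{i+j}=w_{j}$ for $1\le j\le n-p$. The heart of the argument is to upgrade this partial coincidence into a genuine period of $w$: I claim $i$ is a period of $w$, which is absurd because $i<p$ and $p$ is the smallest period. It remains to check $w_{i+j}=w_{j}$ for the leftover indices $n-p+1\le j\le n-i$. For such $j$ we have $j\ge n-p+1\ge p+1$ (this is where $n\ge 2p$ enters), so both $j$ and $i+j$ lie above $p$ and may be shifted down by $p$ using the period-$p$ relation; the shifted indices $j-p$ and $i+(j-p)$ both fall into the interval $[1,n-p]$ on which the internal occurrence already guarantees an $i$-shift coincidence, and composing the three equalities $w_{i+j}=w_{i+j-p}=w_{j-p}=w_{j}$ finishes the claim.

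The only subtle point is this last bootstrapping step, and what makes it work is the precise choice $|v|=n-p$ together with the inequalities $i\le p-1$ and $n\ge 2p$: these are exactly what forces the down-shift by $p$ of an index from the leftover range back into the range $[1,n-p]$ controlled by the assumed internal occurrence. I would verify each boundary inequality with care, as this is the natural place for an off-by-one slip. I would also remark that no separate treatment of $p=1$ is needed: then $v=w_{1}\cdots w_{n-1}$, the range $1\le i\le p-1$ of candidate internal positions is empty, and there is nothing to exclude. Once the claim is in place, $v$ occurs exactly twice in $w$, so $w$ is a complete return to $v$ and therefore closed.
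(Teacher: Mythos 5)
Your proof is correct. Both you and the paper certify closedness via the same witness, namely the longest border of $w$, of length $n-p$ where $p$ is the smallest period (the paper writes it as $v^{n-1}v'$ after decomposing $w=v^{n}v'$ with $v$ primitive of length $p$); the difference lies in how an internal occurrence is excluded. The paper derives from a hypothetical internal occurrence a commutation relation $uv=vu$ with $u$ a proper nonempty prefix of $v$, and then invokes the standard theorem that commuting words are powers of a common word, contradicting the primitivity of $v$. You instead bootstrap the partial coincidence $w_{i+j}=w_{j}$ (valid a priori only for $1\le j\le n-p$) into a full period $i$ of $w$ with $1\le i\le p-1$, contradicting the minimality of $p$ directly; your index-chasing via the chain $w_{i+j}=w_{i+j-p}=w_{j-p}=w_{j}$ is exactly where the hypothesis $n\ge 2p$ is used, and your boundary checks are sound. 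Your route is more elementary and self-contained, avoiding both the primitive-root decomposition and the commutation theorem, at the cost of some careful arithmetic on indices; the paper's route is shorter on the page because it delegates the combinatorial core to a classical lemma. In substance the two arguments are equivalent, since the commutation theorem is itself usually proved by the kind of period-propagation you carry out by hand.
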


\begin{proof}
Let $w=v^{n}v'$ for $n\geq 2$, $v$ a primitive word, and $v'$ a prefix of $v$ such that the exponent of $w$ is equal to $n+|v'|/n$. Then $v^{n-1}v'$ is a border of $w$. If $v^{n-1}v'$ has an internal occurrence in $w$, then there exists a proper prefix $u$ of $v$ such that $uv=vu$, and it is a basic result in Combinatorics on Words that two words commute if and only if they are powers of the same word, in contradiction with our hypotheses on $u$ and $v$.
 \qed
\end{proof}

Moreover, it is easy to see that for any rational number $x$ between $1$ and $2$, there exists a closed word having exponent $x$ (it is sufficient to take a word over $\{a,b\}$ ending with $b$ and with only one other occurrence of $b$, placed in the first half of the word).

\begin{proposition}\label{prop:reg}
Let $\Sigma$ be an alphabet of cardinality $|\Sigma|\ge 2$. The language of  closed words over $\Sigma$ is not regular.
\end{proposition}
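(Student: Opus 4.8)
The plan is to combine the closure of regular languages under intersection with a pumping argument. Fix two distinct letters $a,b\in\Sigma$ (here is where $\abs{\Sigma}\ge 2$ enters), let $L$ denote the language of closed words over $\Sigma$, and study $L\cap a^{*}ba^{*}$. I would prove that $L\cap a^{*}ba^{*}=\{a^{n}ba^{n}\mid n\ge 0\}$, which is a textbook example of a non-regular language; since $a^{*}ba^{*}$ is regular and the regular languages are closed under intersection, $L$ cannot be regular. (Equivalently, one can phrase the whole thing as a single direct application of the pumping lemma, see below.)

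The heart of the argument is two combinatorial facts about words $a^{n}ba^{m}$. First, $a^{n}ba^{n}$ is closed for every $n\ge 0$: its longest border is $a^{n}$, because the prefix of length $\ell$ equals the suffix of length $\ell$ exactly for $0\le\ell\le n$, and no longer border can exist since for $\ell>n$ the unique occurrence of $b$ would have to lie in the same position in the prefix and in the suffix, forcing $\ell=2n+1$; and $a^{n}$ has no internal occurrence in $a^{n}ba^{n}$ because the only maximal runs of $a$'s have length exactly $n$, so $a^{n}$ occurs solely as a prefix and as a suffix. Hence $a^{n}ba^{n}$ is a complete return to $a^{n}$, i.e.\ closed by Remark~\ref{obs}. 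Second, if $n\ne m$ then $a^{n}ba^{m}$ is \emph{not} closed: using that closedness is invariant under reversal (if $\beta$ is the longest border of $w$ then $\tilde\beta$ is the longest border of $\tilde w$, and internal occurrences correspond under reversal, so the criterion in Remark~\ref{obs}(5) is symmetric), we may assume $n<m$; then the longest border is $a^{n}$ (or $\epsilon$ when $n=0$), and it has an internal occurrence, because the final run of $a$'s has length $m>n$ and therefore contains an occurrence of $a^{n}$ preceded by a letter and followed by at least one further letter, so by Remark~\ref{obs}(5) the word is not closed.

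For the self-contained pumping-lemma version: assuming $L$ regular with pumping length $p$, take $w=a^{p}ba^{p}\in L$ and write $w=xyz$ with $\abs{xy}\le p$ and $\abs{y}\ge 1$; then $y=a^{k}$ with $k\ge 1$ lies inside the first block, so $xy^{0}z=a^{p-k}ba^{p}\in L$, contradicting the second fact above.

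I expect the only genuine obstacle to be the careful verification of these two facts, especially identifying the longest border of $a^{n}ba^{m}$ and producing the internal occurrence when $n\ne m$. The delicate subcase is $m=n+1$: there the "obvious" occurrence of $a^{n}$ sitting at the right end of the word is the suffix occurrence, not an internal one, so one must instead use the occurrence of $a^{n}$ that begins immediately after the letter $b$, which is preceded by $b$ and followed by the last $a$. The remaining ingredients — reversal-invariance of closedness and non-regularity of $\{a^{n}ba^{n}\mid n\ge 0\}$ — are routine.
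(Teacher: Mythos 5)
Your proof is correct and follows essentially the same route as the paper: the paper intersects $L$ with $a^{*}b^{*}a^{*}$ and shows the result is $\{a^{n}b^{m}a^{n}\mid n,m\ge 0\}$, while you intersect with $a^{*}ba^{*}$ to get $\{a^{n}ba^{n}\mid n\ge 0\}$ — the same idea with the same combinatorial core (identifying the longest border $a^{n}$ and exhibiting its internal occurrence when the two $a$-runs differ). Your verification of the two facts, including the $m=n+1$ subcase, is more detailed than the paper's, but the argument is the same.
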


\begin{proof}
Let $L$ be the language of  closed words over $\Sigma$ and let $a,b\in \Sigma$ be different letters. Let us assume that $L$ is regular. This implies that also $L\cap a^{*}b^{*}a^{*}$ is regular, since $a^{*}b^{*}a^{*}$ is a regular language and the intersection of two regular languages is regular. We claim that $L\cap a^{*}b^{*}a^{*}=\{a^{n}b^{m}a^{n} \mid n,m\ge 0\}$, which is not a regular language, and so we have a contradiction.

Clearly, every word in $\{a^{n}b^{m}a^{n} \mid n,m\ge 0\}$ is  closed. Suppose now that $w$ belongs to $a^{*}b^{*}a^{*}$. Hence, $w=a^{n}b^{m}a^{k}$, for some $n,m,k\ge 0$. If $n\neq k$, say $n<k$, then the longest repeated prefix of $w$ is $a^{n}$ and it has at least one internal occurrence in $w$. By Remark \ref{obs}, $w$ is not  closed. The case $n>k$ is symmetric. 
\qed
\end{proof}

Finally, we recall two results from \cite{DelFi13}.

\begin{lemma}\cite[Lemma 4]{DelFi13}\label{lem:4}
 Let $w$ be a non-empty word over $\Sigma$. Then there exists at most one letter $x\in \Sigma$  such that $wx$ is closed.
\end{lemma}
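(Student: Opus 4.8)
The plan is to argue by contradiction. Suppose $wx$ and $wy$ are both closed for two distinct letters $x\neq y$. Since $wx$ is closed and $|wx|\ge 2$, its longest border $b_x$ is non-empty (by Remark~\ref{obs} the longest border of a closed word of length at least two has no internal occurrence, while the empty word always does). Being a suffix of $wx$, $b_x$ ends with $x$, so $b_x=ux$ where $u$ is the suffix of $w$ of length $|b_x|-1$; being a proper prefix of $wx$, $b_x$ has length at most $|w|$ and is therefore a prefix of $w$, so $u$ is a prefix of $w$ with $w_{|u|+1}=x$. Hence $u$ is a border of $w$ (possibly empty). The same analysis applied to $wy$ gives $b_y=u'y$ with $u'$ a border of $w$ and $w_{|u'|+1}=y$. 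From $x=w_{|u|+1}\neq w_{|u'|+1}=y$ we get $|u|\neq|u'|$; by symmetry assume $|u'|<|u|$, and observe that $u'$ is then a border of $u$ (it is both a prefix and a suffix of $w$, of length smaller than $|u|$).

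Now comes the key step. Because $wy$ is closed, Remark~\ref{obs} tells us that its longest border $b_y=u'y$ occurs in $wy$ only as a prefix and as a suffix; since the suffix occurrence uses the appended letter $y$, the only occurrence of $u'y$ lying entirely inside $w$ is the prefix one, at position $1$. On the other hand, write $w=su$ (valid since $u$ is a suffix of $w$), with $|s|=|w|-|u|\ge 1$ because $u$ is a proper prefix of $w$; and write $u=u't$ (valid since $u'$ is a prefix of $u$ of length smaller than $|u|$). As $u$ is a prefix of $w$, the first letter of the non-empty word $t$ equals $u_{|u'|+1}=w_{|u'|+1}=y$, so $w=s\,u'\,y\cdots$ and $u'y$ occurs in $w$ at position $|s|+1=|w|-|u|+1>1$. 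This occurrence lies entirely inside $w$ (it only involves the letters $u_1,\ldots,u_{|u'|+1}$ of the suffix $u$, and $|u'|+1\le|u|$) and is distinct from the prefix occurrence, contradicting the previous sentence.

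The routine part of the argument is the translation of ``closed'' into a statement about the occurrences of the longest border, which is precisely Remark~\ref{obs}. I expect the main obstacle to be the bookkeeping at the end: one must check both that the second occurrence of $u'y$ really stays within $w$ (which needs $|u'|+1\le|u|$) and that it is not the prefix occurrence (which needs $|s|\ge1$, i.e.\ that $u$ is a proper prefix of $w$); keeping straight which occurrences use the appended letter and which do not is where care is required. Both required inequalities are immediate from $u$ and $u'$ being borders of $w$ with $|u'|<|u|$.
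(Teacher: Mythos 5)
The paper does not prove this lemma: it is imported from \cite{DelFi13} as a black box, so there is no in-paper argument to compare yours against. Your proof is correct and self-contained. The two places that need care both check out: (i) the longest border $b_x$ of $wx$ is indeed non-empty, because the empty word has an internal occurrence in any word of length at least $2$, so it cannot be the longest border of a closed word $wx$ with $|w|\ge 1$; and (ii) the second occurrence of $u'y$ you exhibit at position $|s|+1$ genuinely lies inside $w$ (its last position is $|s|+|u'|+1\le |s|+|u|=|w|$) and is genuinely distinct from the prefix occurrence (since $|s|=|w|-|u|\ge 1$ because $u$, being a border of $w$, is a proper prefix). Together with the observation that $w_{|u|+1}=x\neq y=w_{|u'|+1}$ forces $|u|\neq|u'|$, the contradiction with $u'y$ occurring in $wy$ only as a prefix and a suffix is airtight. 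This is a clean argument entirely in the spirit of the characterizations listed in Remark~\ref{obs}.
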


\begin{lemma}\cite[Lemma 5]{DelFi13}\label{lem:5}
 Let $w$ be a closed word. Then $wx$, $x\in \Sigma$, is closed if and only if $wx$ has the same period of $w$.
\end{lemma}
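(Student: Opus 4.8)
The plan is to translate everything into the language of borders via the standard border--period duality, and then to use the characterization (Remark~\ref{obs}, item~5) that a non-empty word is closed exactly when its \emph{longest} border has no internal occurrence. Write $n=\abs{w}$, let $p$ be the period of $w$, and let $u$ be the longest border of $w$, so that $\abs{u}=n-p$ and, since $w$ is closed, $u$ occurs in $w$ only as a prefix and as a suffix. I would first record the elementary monotonicity fact that the period $p'$ of $wx$ satisfies $p'\ge p$: any period of $wx$ that is at most $n$ restricts to a period of the prefix $w$, and a period equal to $n+1$ is trivially $\ge p$, so the smallest period of $wx$ cannot drop below $p$. The longest border of $wx$ is then $u'$ with $\abs{u'}=n+1-p'$, and it ends in the letter $x$, being a suffix of $wx$. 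The degenerate case $\abs{u}=0$, i.e.\ single letters, I would dispatch by hand, since there $wx$ has the same period as $w$ precisely when $x$ repeats the letter.

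For the direction ``same period $\Rightarrow$ closed'', assume $wx$ has period $p$. Then $p'=p$ and the longest border $u'$ of $wx$ has length $\abs{u}+1$; concretely $u'=u\,w_{n+1-p}$, the one-letter extension of $u$. The key step is to show $u'$ has no internal occurrence in $wx$. Suppose it occurred starting at some genuinely internal position $i$, so that $i\ge 2$ and the occurrence ends at or before position $n$; a short computation on the indices forces $2\le i\le p$. Cutting this occurrence down to its length-$\abs{u}$ prefix yields an occurrence of $u$ in $wx$ that lies entirely inside positions $1,\dots,n$ and is neither the prefix nor the suffix of $w$, i.e.\ an internal occurrence of $u$ in $w$. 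This contradicts the closedness of $w$, so $u'$ has no internal occurrence and $wx$ is closed by Remark~\ref{obs}.

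For the converse ``closed $\Rightarrow$ same period'', assume $wx$ is closed with longest border $u'$ and period $p'$. Stripping the final letter $x$ from $u'$ leaves a word $u''$ that is simultaneously a suffix of $w$ and a prefix of $w$, hence a border of $w$; therefore $\abs{u''}\le\abs{u}$ and $p'\ge p$, recovering the monotonicity above. It remains to rule out $p'>p$. If $p'>p$ then $\abs{u'}\le\abs{u}$, so $u'$ is a prefix of $w$ of length at most $\abs{u}$ and hence a prefix of $u$ itself. Reading $u'$ off the front of the suffix copy of $u$ in $w$ produces an occurrence of $u'$ beginning at position $n-\abs{u}+1$ and ending at position $n-\abs{u}+\abs{u'}$, with the former $\ge 2$ and the latter $\le n$; this occurrence is neither a prefix nor a suffix of $wx$, hence internal, contradicting the closedness of $wx$. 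Thus $p'=p$, i.e.\ $wx$ has the same period as $w$.

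The main obstacle is the index bookkeeping rather than any deep idea: in each direction one has to verify carefully that the occurrence produced is \emph{strictly} internal---neither flush with the left end nor with the right end---which is exactly where the inequalities $2\le i\le p$ in the first direction and $2\le n-\abs{u}+1$, $n-\abs{u}+\abs{u'}\le n$ in the second are needed, and where the empty-border case $\abs{u}=0$ must be separated out. Once the interval arithmetic is pinned down, both implications follow immediately from the single characterization that closedness is the absence of an internal occurrence of the longest border.
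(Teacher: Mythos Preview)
The paper does not supply a proof of this lemma: it is quoted verbatim from \cite{DelFi13} (as Lemma~5 there) and used as a black box later in the proof of Theorem~\ref{max}. So there is no ``paper's own proof'' to compare against.

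Your argument is correct. The reduction to borders via the period--border correspondence and the use of Remark~\ref{obs}(5) is the natural approach, and both directions go through as you describe. A couple of minor remarks worth making explicit when you write it up cleanly:
\begin{itemize}
\item The degenerate case $\abs{u}=0$ is precisely the case $\abs{w}=1$: a closed word of length $\ge 2$ cannot have empty longest border, since $\epsilon$ always has internal occurrences in a word of length $\ge 2$ (any factorization $w=vz$ with $v,z$ non-empty gives one), violating item~5 of Remark~\ref{obs}. So your ``dispatch by hand'' is just the single-letter check.
\item In the converse direction you use that $\abs{u'}\ge 1$, i.e.\ that the longest border of $wx$ is non-empty. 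This follows for the same reason as above, since $wx$ is closed and $\abs{wx}\ge 2$; it is worth stating so that the stripping step $u'\mapsto u''$ is well defined.
\end{itemize}
With those two points made explicit, the index bookkeeping you outline is exactly what is needed, and no further idea is missing.
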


%%%%%%%%%%%%%%%%%%%%%%%%%%%%%%%%%%%%%%%%%%%%%%%%%%%%%%%%%%%%%%%%%%%%%%%%%%%%%%%%%%%%%%%%%%%%%%
\section{Closed Factors}
%%%%%%%%%%%%%%%%%%%%%%%%%%%%%%%%%%%%%%%%%%%%%%%%%%%%%%%%%%%%%%%%%%%%%%%%%%%%%%%%%%%%%%%%%%%%%%

Let $w$ be a word. A factor of $w$ that is a  closed word is called a \emph{closed factor} of $w$.  The set of  closed factors of the word $w$ is denoted by $C(w)$. 

\begin{lemma}\label{lem:bound}
For any word $w$ of length $n$, one has $|C(w)|\ge n+1$.
\end{lemma}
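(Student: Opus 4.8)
The plan is to argue by induction on $n=|w|$. The base case $w=\epsilon$ is immediate, since $\epsilon$ is closed and $|C(\epsilon)|=1$. For the inductive step I would write $w=w'a$ with $a\in\Sigma$ and $|w'|=n-1$. Every factor of $w'$ is a factor of $w$, so $C(w')\subseteq C(w)$, and hence it is enough to exhibit one closed factor of $w$ that is not a factor of $w'$: then $|C(w)|\ge|C(w')|+1\ge n+1$ by the induction hypothesis. In other words, the statement reduces to the claim that \emph{every nonempty word has a closed suffix that is not a factor of the word obtained by deleting its last letter.}

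To prove this claim for a nonempty word $w=w_1\cdots w_n$, set $w'=w_1\cdots w_{n-1}$ and let $s$ be the shortest suffix of $w$ that is not a factor of $w'$; this exists because $w$ is too long to be a factor of $w'$. If $|s|=1$ then $s$ is a single letter, which is a closed word, and we are done. Otherwise $|s|\ge 2$; let $t$ be $s$ with its first letter removed, that is, the suffix of $w$ of length $|s|-1$. By minimality of $s$ we have $t\in\Fact(w')$ while $s\notin\Fact(w')$, so $t$ occurs at least twice in $w$: once as a suffix of $w$, and once inside $w'$. Let $j$ be the starting position of the rightmost occurrence of $t$ in $w$ \emph{other} than its suffix occurrence, and set $g:=w_jw_{j+1}\cdots w_n$. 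Then $g$ begins and ends with $t$, and by the maximality of $j$ there is no further occurrence of $t$ inside $g$; since $t\neq g$, this exhibits $g$ as a complete return to $t$, so $g$ is closed by Remark~\ref{obs}. (Incidentally, because $s$ occurs in $w$ only as a suffix, $t$ is in fact the longest repeated suffix of $w$, so $g$ is the complete return within $w$ to the longest repeated suffix.) Finally, $g$ ends at position $n$, so $g\in\Suff(w)$, and $g$ has $s$ as a suffix, whence $g\notin\Fact(w')$; thus $g$ is the required new closed factor.

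The step I expect to be the main obstacle is choosing the right closed factor to add. The obvious candidate --- the shortest suffix $s$ of $w$ that does not already occur in $w'$ --- need not itself be closed (appending $a$ to $aabb$ gives $s=ba$, which is not closed), so one has to see that the correct object is instead the complete return $g$ to the longest proper suffix $t$ of $s$; once this choice is made, the remaining checks (that $g$ is closed, is a suffix of $w$, and is not a factor of $w'$) are routine bookkeeping with occurrence positions. I would also note that the induction can be unrolled into a direct count: applying the construction to each prefix $w_1\cdots w_i$, $1\le i\le n$, yields nonempty closed factors $g_1,\dots,g_n$ with $g_i$ a suffix of $w_1\cdots w_i$ but not a factor of $w_1\cdots w_{i-1}$; these are pairwise distinct, so together with $\epsilon$ they witness $|C(w)|\ge n+1$.
\qed
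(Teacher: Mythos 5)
Your proof is correct and rests on the same idea as the paper's: every position $i$ of $w$ is the right end of a closed factor whose first occurrence ends at $i$, which yields $n$ distinct non-empty closed factors plus $\epsilon$. The only real difference is the choice of witness --- the paper takes the \emph{longest} closed factor ending at $i$ and shows it cannot have occurred earlier (an earlier occurrence would extend it to a longer closed factor, a contradiction), whereas you explicitly construct the complete return to the longest repeated suffix and note that it contains the shortest unrepeated suffix; both verifications are sound, so this is essentially the paper's argument recast as an induction.
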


\begin{proof}
We show that every position of $w$ is the ending position of an occurrence of a distinct  closed factor of $w$. 
Thus $w$ contains at least $n$ non-empty  closed factors, and the claim follows. Indeed, let $v$ be the longest non-empty  closed factor ending in position $i$, so that $w_{i-|v|+1}\cdots w_{i} =v$. Since $a$ is  closed for every $a\in \Sigma$, such a factor always exists.  If $v$ did not occur before in $w$, then we are done. Otherwise, let $j$ be the largest position smaller than $i$ such that $w_{j-|v|+1}\cdots w_{j} =v$. Set $v' = w_{j-|v|+1}\cdots w_i$ and observe that $v'$ is a  closed factor ending in $i$, with longest border $v$.  But $|v'| > |v|$, in contradiction to the choice of $v$.\qed
\end{proof}

\begin{lemma}\label{lem:trian}
 For any words $u,v$ one has $|C(u)|+|C(v)|\le |C(uv)|+1$.
\end{lemma}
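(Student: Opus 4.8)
The plan is to show that the map sending a closed factor of $u$ (or of $v$) to the same word, viewed as a factor of $uv$, is essentially injective, with only the empty word counted twice. Concretely, I would set up an injection from $(C(u)\setminus\{\epsilon\}) \sqcup (C(v)\setminus\{\epsilon\})$ into $C(uv)\setminus\{\epsilon\}$, which together with $\epsilon \in C(uv)$ gives $|C(u)|-1 + |C(v)|-1 + 1 \le |C(uv)|$, i.e.\ the claimed inequality $|C(u)|+|C(v)| \le |C(uv)|+1$. Every closed factor of $u$ is a closed factor of $uv$, and likewise for $v$, so the two maps individually are fine; the whole issue is the overlap between the images.

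The key step is to break the disjoint union according to where an occurrence sits. Fix, for each nonempty $c \in C(u)$, a specific occurrence of $c$ inside the prefix $u$ of $uv$, and for each nonempty $d \in C(v)$, a specific occurrence inside the suffix $v$ of $uv$. If $c \in C(u)$ and $d \in C(v)$ happen to be the same word $z$, I need to send them to two different closed factors of $uv$. The natural idea is to keep one of them (say the copy coming from $u$) as $z$ itself, and to replace the other (the copy coming from $v$) by a strictly longer closed factor of $uv$ — for instance a complete return to $z$ in $uv$ using the $u$-occurrence as prefix and the $v$-occurrence as suffix, or more carefully the shortest such complete return, which is closed with longest border $z$ (using Remark~\ref{obs}). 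Since $uv$ contains an occurrence of $z$ entirely in $u$ and one entirely in $v$, and these are distinct, such a complete return exists and has length strictly greater than $|z|$, hence is different from $z$. One then has to argue that this repair process, done simultaneously over all words that are common to $C(u)$ and $C(v)$, still yields an injection — i.e.\ that the ``lengthened'' images do not collide with each other or with the untouched short factors.

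The main obstacle is precisely this last injectivity bookkeeping: a single closed factor $t$ of $uv$ that crosses the boundary could arise as the ``repair'' of several different common words $z$, since $t$ has only one longest border but could a priori be hit as a complete return to a non-longest border. The clean way around this is to always use the \emph{longest border} of the constructed factor: for each common word $z$, let $t_z$ be the shortest complete return to $z$ in $uv$ that uses a prefix-occurrence inside $u$ and a suffix-occurrence inside $v$; then $z$ is the longest border of $t_z$ (any longer border would be a repeated prefix longer than $z$, contradicting minimality or the choice of occurrences), so the map $z \mapsto t_z$ is injective, and each such $t_z$ genuinely crosses the $u$/$v$ boundary, hence is distinct from every closed factor lying wholly inside $u$ or wholly inside $v$. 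Assembling these three pieces — the identity on $C(u)$-only words inside $u$, the identity on $C(v)$-only words, kept inside $v$ by shifting occurrences as needed, and $z\mapsto t_z$ on the common words — gives the desired injection, after one checks the easy disjointness of the three image sets (boundary-crossing versus non-crossing, plus the original argument of Lemma~\ref{lem:bound}-style uniqueness for the non-crossing ones).

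Alternatively, and perhaps more economically, one can avoid constructing explicit bijections by a counting argument on ending positions: in Lemma~\ref{lem:bound} one shows every position is the endpoint of a unique longest closed factor; here one would track, for the word $uv$, how many positions within $v$ (respectively within $u$, reading suffixes) contribute new closed factors not already counted in $C(u)$, and show the total is at least $|C(v)|-1$. I expect the boundary-crossing injectivity to be the crux either way, and would present the first approach since it localizes the difficulty to the single clean observation that $z$ is the longest border of $t_z$.
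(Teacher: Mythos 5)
Your overall strategy -- inject $C(u)$ identically and repair the words of $C(v)$ that collide by lengthening them via complete returns -- is the same one the paper uses, but the repair step as you describe it has a genuine gap. The claim that ``each such $t_z$ genuinely crosses the $u$/$v$ boundary, hence is distinct from every closed factor lying wholly inside $u$ or wholly inside $v$'' conflates \emph{occurrences} with \emph{factors}: $C(uv)$ is a set of words, and a word whose chosen occurrence crosses the boundary may perfectly well have another occurrence entirely inside $u$. Concretely, take $u=aa$ and $v=a$. The common nonempty word $z=a$ gets repaired to $t_a=aa$ (the complete return to $a$ spanning positions $2$--$3$ of $aaa$), but $aa\in C(u)$ is already mapped to itself, so your map sends two domain elements to the word $aa$. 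Here $|C(u)|+|C(v)|=5=|C(aaa)|+1$, so the bound is tight and the lost element cannot be absorbed: the argument fails, not just the bookkeeping. (A secondary issue: $t_z$ as you define it need not exist at all, since an occurrence of $z$ straddling the boundary can sit between the last occurrence inside $u$ and the first inside $v$, so that no complete return to $z$ has exactly those two occurrences.)

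The paper's proof closes exactly this gap by \emph{iterating} the lengthening. For nonempty $z\in C(v)$, take the first occurrence of $z$ starting after position $|u|$, say ending at position $p$; if some earlier occurrence of $z$ exists, replace $z$ by the complete return $z'$ to $z$ ending at $p$, and repeat with $z'$ until you reach a closed factor whose \emph{first} occurrence in $uv$ ends at $p>|u|$. Such a factor is not even a factor of $u$, hence lies in $C(uv)\setminus C(u)$ -- this is the property your single-step $t_z$ lacks, since $t_z$ can still occur earlier, inside $u$. Your observation that $z$ is the longest border of a complete return to $z$ is correct and is indeed the reason the construction is reversible, but it only separates the repaired images from one another, not from the untouched copies of $C(u)$; the ``first occurrence ends beyond $|u|$'' invariant is what does the latter. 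Your closing alternative (counting new closed factors by ending position, as in Lemma~\ref{lem:bound}) is in fact closer to what the paper does; developed properly it becomes the paper's proof.
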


\begin{proof}
Clearly, $C(u)\subseteq C(uv)$. In order to prove the statement, it is sufficient to prove that for any non-empty $z$ in $C(v)$, there exists an $f(z)$ in $C(uv)\setminus C(u)$ and $f$ is injective. So let $z\in C(v)$, $uv=w=w_{1}\cdots w_{n}$, and let $j$ be the smallest integer greater than $|u|$ such that $z=w_{j}\cdots w_{j+|z|-1}$. If $j$ is the smallest integer such that $z=w_{j}\cdots w_{j+|z|-1}$, then set $f(z)=z$. Otherwise, there is in $w$ a  closed $z'$ to $z$ ending in position $w_{j+|z|-1}$. If this is the first occurrence of $z'$ in $w$, then set $f(z)=z'$, otherwise repeat the construction  for $z'$. Eventually, we will find a  closed factor $f(z)=z^{(k)}$  whose first occurrence in $w$ ends in position $w_{j+|z|-1}$. 

By construction, $f$ has the desired properties. 
\qed
\end{proof}

\begin{proposition}\label{prop:rich}
 Let $w$ be a word of length $n$. If $C(w)\subseteq \PAL(w)$, then $C(w)=\PAL(w)$ and $|C(w)|=|\PAL(w)|= n+1$. In particular,  $w$ is a rich word.
\end{proposition}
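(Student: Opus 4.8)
The plan is to combine the lower bound on closed factors furnished by Lemma~\ref{lem:bound} with the classical upper bound on palindromic factors. First I would recall the fact quoted in the Introduction (from \cite{DrJuPi01}): any word $w$ of length $n$ satisfies $|\PAL(w)| \le n+1$. With that in hand, the argument is a short squeezing step.

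Assume $C(w) \subseteq \PAL(w)$. By Lemma~\ref{lem:bound} we have $n+1 \le |C(w)|$, while the inclusion gives $|C(w)| \le |\PAL(w)|$, and the palindrome bound gives $|\PAL(w)| \le n+1$. Chaining these, $n+1 \le |C(w)| \le |\PAL(w)| \le n+1$, so all three quantities are equal to $n+1$. Since $C(w)$ is a subset of the finite set $\PAL(w)$ and the two sets have the same cardinality, they are equal: $C(w) = \PAL(w)$. Finally, $|\PAL(w)| = n+1$ is precisely the definition of a rich word, so $w$ is rich.

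I do not expect any real obstacle here: once Lemma~\ref{lem:bound} and the upper bound $|\PAL(w)| \le n+1$ are available, the statement is just a pigeonhole/squeezing observation. The only point to be careful about is to explicitly invoke the palindrome upper bound (stated as known in the Introduction), since the whole proof hinges on the two bounds meeting at $n+1$.
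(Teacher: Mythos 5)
Your proposal is correct and follows essentially the same squeezing argument as the paper's own proof: Lemma~\ref{lem:bound} gives $|C(w)|\ge n+1$, the classical bound gives $|\PAL(w)|\le n+1$, and the inclusion forces equality of the sets and of all three quantities. No differences worth noting.
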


\begin{proof}
On the one hand, from Lemma \ref{lem:bound}, one has $|C(w)|\ge n+1$. On the other hand, one has $|\PAL(w)|\le n+1$. Hence, if $C(w)\subseteq \PAL(w)$, then it must be $C(w)=\PAL(w)$ and  $|C(w)|=|\PAL(w)|= n+1$, and so $w$ is a rich word.
\qed
\end{proof}

Bucci et al.~\cite[Proposition 4.3]{BuDelDel09} showed  that a word $w$ is rich if and only if every  closed factor $v$ of $w$ has the property that the longest palindromic prefix (or suffix) of $v$ is unrepeated in $v$. Moreover, they proved the following remarkable result:

\begin{theorem}[Bucci et al.~{\cite[Corollary 5.2]{BuDelDel09}}]
\label{theor:buc}
 A palindromic word $w$ is rich if and only if $\PAL(w)\subseteq C(w)$. 
\end{theorem}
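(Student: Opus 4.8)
The plan is to prove both directions by induction on the length of the palindromic word $w$, using the standard tool of \emph{palindromic closure} together with the characterization of richness via palindromic factors added one letter at a time. Recall that $w$ is rich if and only if, for every prefix $u$ of $w$, the longest palindromic suffix of $u$ is \emph{unioccurrent} in $u$ (equivalently, every prefix extension adds at most one new palindrome, and rich words are exactly those for which each extension adds exactly one whenever possible). We will combine this with Lemma \ref{lem:5}, which tells us that appending a letter to a closed word keeps it closed precisely when the period is preserved, and with the result of Bucci et al.\ cited just before the theorem: $w$ is rich iff for every closed factor $v$ of $w$, the longest palindromic prefix of $v$ is unrepeated in $v$.

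First I would handle the forward direction. Assume $w$ is a rich palindrome and let $p\in\PAL(w)$; we must show $p$ is closed. Consider the leftmost occurrence of $p$ in $w$, say $w = \alpha p \beta$ with $\alpha$ of minimal length, and look at the prefix $u = \alpha p$ of $w$. By richness, $p$ — being the longest palindromic suffix of $u$ (or if it is not, a longer palindromic suffix $q$ of $u$ has $p$ as a factor, and since $w$ is a palindrome, one transfers the argument to $q$ and then uses that palindromes of a rich word are themselves closed once we know it for longer ones) — is unioccurrent in $u$. Now take the rightmost occurrence of $p$ in $w$; by the palindromic symmetry of $w$, the mirror argument gives a suffix of $w$ in which $p$ is unioccurrent. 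The factor of $w$ spanned from the leftmost to the rightmost occurrence of $p$ is then a complete return to $p$: $p$ occurs there as a prefix and as a suffix, and any internal occurrence would, by maximality/minimality, contradict the unioccurrence statements above. Hence that factor is closed; but we actually want $p$ itself to be closed. For that one argues that the longest border of $p$ — which exists since $p$ is a palindrome and nonempty — has no internal occurrence in $p$, again by unwinding the richness condition of Bucci et al.\ applied to a suitable closed factor containing $p$.

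For the converse, assume $\PAL(w)\subseteq C(w)$; we want $w$ rich. Using the Bucci et al.\ criterion, it suffices to show that for every closed factor $v$ of $w$, the longest palindromic prefix of $v$ is unrepeated in $v$. Suppose not: some closed factor $v$ has its longest palindromic prefix $q$ occurring twice in $v$. Then the complete return to the first two occurrences of $q$ inside $v$ is a closed factor $z$ of $w$ with longest border $q$; being a palindrome (it is a complete return to the palindrome $q$ and... here one must check $z$ is a palindrome) it lies in $\PAL(w)\subseteq C(w)$ trivially, but the point is to derive that $q$ was \emph{not} the longest palindromic prefix of $v$ — a contradiction. Concretely, the assumption $\PAL(w)\subseteq C(w)$ forces every palindromic factor to "close up," and one shows via Lemma \ref{lem:5} that this propagates the period, eventually producing a palindromic prefix of $v$ longer than $q$.

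The main obstacle, I expect, is the bookkeeping that links the \emph{global} hypothesis (``$w$ is a palindrome'' and ``$\PAL(w)\subseteq C(w)$'') to the \emph{local} richness condition (longest palindromic suffix of each prefix is unioccurrent). The delicate point is that a palindromic factor $p$ of $w$ being a closed \emph{word} in isolation does not immediately say anything about its occurrences \emph{inside} $w$; one needs the palindromic symmetry of $w$ to turn a statement about prefixes into one about suffixes, and one needs the Bucci et al.\ lemma to convert between ``closed factor'' and ``unrepeated longest palindromic prefix.'' Getting the induction to feed itself cleanly — so that when we argue about $p$ we may already assume the result for all longer palindromic factors of $w$ — is where the real care is required.
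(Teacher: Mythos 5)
A preliminary but important point: the paper does not prove this statement at all. It is imported verbatim from Bucci, de Luca and De Luca (\cite[Corollary~5.2]{BuDelDel09}) and used as a black box, so there is no in-paper argument to compare yours against; your proposal has to stand on its own, and as written it does not.

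In the forward direction there are two concrete problems. First, the factor of $w$ stretching from the leftmost to the rightmost occurrence of $p$ is not in general a complete return to $p$ (there may be intermediate occurrences), and in any case it is beside the point: being closed is an intrinsic property of the word $p$, so you must show that the longest border of $p$ has no internal occurrence \emph{in $p$}, not that some factor of $w$ containing $p$ is closed. Second, that decisive step is exactly the statement to be proved, and you dispatch it with ``one argues \dots by unwinding the richness condition.'' A route that actually works: for a palindrome $p$, every proper palindromic suffix is a border and the longest border is a palindrome, so the longest border $u$ of $p$ \emph{is} the longest proper palindromic suffix of $p$; if $u$ had an internal occurrence in $p$, the complete return to $u$ ending at the suffix occurrence of $u$ would be a palindromic proper suffix of $p$ longer than $u$ (using \cite[Theorem~2.14]{GlJuWiZa09}, valid because $p$ is a factor of the rich word $w$), a contradiction. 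None of this reduction appears in your sketch.

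The converse direction is where the proposal genuinely breaks. To reach a contradiction you need the complete return $z$ to the palindrome $q$ to be a palindrome, but that property is a known \emph{consequence of richness} --- the very thing you are trying to establish --- and does not follow from the hypothesis $\PAL(w)\subseteq C(w)$, which points in the opposite direction (palindromic factors are closed, not: closed factors or complete returns are palindromes). You flag this yourself (``here one must check $z$ is a palindrome'') and never resolve it; the closing claim that Lemma~\ref{lem:5} ``propagates the period, eventually producing a palindromic prefix of $v$ longer than $q$'' is asserted rather than argued. Note also that the hypothesis that $w$ is itself a palindrome --- which is part of the statement and is what should license any mirror-symmetry argument --- is never actually used in this direction, which is a strong signal that the argument is not engaging with the real structure of the result.
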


In Section \ref{sec:bin}, we will prove that the condition $\PAL(w)= C(w)$ characterizes the  CR-poor words over a binary alphabet.

%%%%%%%%%%%%%%%%%%%%%%%%%%%%%%%%%%%%%%%%%%%%%%%%%%%%%%%%%%%%%%%%%%%%%%%%%%%%%%%%%%%%%%%%%%%%%%
\section{CR-poor Words}
%%%%%%%%%%%%%%%%%%%%%%%%%%%%%%%%%%%%%%%%%%%%%%%%%%%%%%%%%%%%%%%%%%%%%%%%%%%%%%%%%%%%%%%%%%%%%

By Lemma \ref{lem:bound}, we have that $n+1$ is a lower bound on the number of  closed factors of a word of length $n$. We  introduce the following definition:

\begin{definition}
 A word $w\in \Sigma^{*}$ is \emph{CR-poor} if $|C(w)|=|w|+1$. We also set $$\LL_{\Sigma}=\{w\in \Sigma^{*} :  |C(w)|=|w|+1\}$$ the language of CR-poor words over the alphabet $\Sigma$.
\end{definition}

\begin{remark}
 If $|\Sigma|=1$, then $\LL_{\Sigma}=\Sigma^{*}$. So in what follows we will suppose $|\Sigma|\ge 2$.
\end{remark}

Note that, for any alphabet $\Sigma$, the language $\LL_{\Sigma}$ of CR-poor words over $\Sigma$ is closed under reversal. Indeed, it follows from the definition that a word $w\in \Sigma^{*}$ is  closed if and only if its reversal $\tilde{w}$ is  closed. 
 
\begin{proposition}\label{prop:fac}
The language $\LL_{\Sigma}$ of CR-poor words over $\Sigma$ is a factorial language.
\end{proposition}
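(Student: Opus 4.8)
The plan is to show that every factor of a CR-poor word is itself CR-poor, equivalently that $\LL_\Sigma$ is closed under taking factors. Since every factor of a word is obtained by taking a prefix of a suffix, and $\LL_\Sigma$ is closed under reversal (as already noted right before the statement), it suffices to prove that if $w\in\LL_\Sigma$ then every prefix of $w$ lies in $\LL_\Sigma$; applying this to $w$ and to $\tilde w$ then handles suffixes, and iterating gives all factors. In fact it is cleanest to prove the single-letter step: if $uv\in\LL_\Sigma$ with $v$ a single letter, then $u\in\LL_\Sigma$; and symmetrically if $av\in\LL_\Sigma$ with $a$ a single letter then $v\in\LL_\Sigma$. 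The symmetric version follows from the prefix version by reversal, so the whole proof reduces to one claim.

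The key inequality is Lemma~\ref{lem:trian}: for any words $x,y$ one has $|C(x)|+|C(y)|\le|C(xy)|+1$. Apply this with $x=u$ and $y=a$ a single letter, so $xy=ua=w$ has length $n$, say. Then $|C(u)|+|C(a)|\le|C(w)|+1$. Now $|C(a)|=2$ (the closed factors of a single letter are $\epsilon$ and $a$), and since $w$ is CR-poor, $|C(w)|=n+1$. Hence $|C(u)|+2\le n+2$, i.e. $|C(u)|\le n-1=|u|+1$. Combined with the general lower bound $|C(u)|\ge|u|+1$ from Lemma~\ref{lem:bound}, we get $|C(u)|=|u|+1$, so $u\in\LL_\Sigma$. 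This shows every prefix obtained by deleting the last letter is CR-poor; induction on the number of deleted letters shows every prefix of $w$ is CR-poor.

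For the general factor, let $v$ be a factor of $w\in\LL_\Sigma$. Write $w=puq$ where we want $v=u$; first, $u$ is a suffix of the prefix $pu$ of $w$. By the prefix result, $pu\in\LL_\Sigma$. Now $pu\in\LL_\Sigma$ implies $\widetilde{pu}=\tilde u\tilde p\in\LL_\Sigma$ since $\LL_\Sigma$ is closed under reversal; $\tilde u$ is a prefix of $\tilde u\tilde p$, so by the prefix result again $\tilde u\in\LL_\Sigma$, and reversing once more gives $u=v\in\LL_\Sigma$. Therefore $\LL_\Sigma=\Fact(\LL_\Sigma)$, which is exactly the definition of a factorial language.

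I do not anticipate a serious obstacle here: the entire argument is a two-line combination of the lower bound (Lemma~\ref{lem:bound}) and the superadditivity estimate (Lemma~\ref{lem:trian}), together with the already-observed closure under reversal. The only point requiring the slightest care is the bookkeeping that ``closed under deleting one boundary letter'' plus reversal yields ``closed under taking arbitrary factors,'' which is the short induction described above; one should also note the degenerate cases ($v=\epsilon$, or $v=w$) are trivially included since $\epsilon$ and $w$ themselves are CR-poor by hypothesis and definition.
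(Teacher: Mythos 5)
Your proof is correct and rests on exactly the same two ingredients as the paper's: the superadditivity inequality of Lemma~\ref{lem:trian} combined with the lower bound of Lemma~\ref{lem:bound}. The paper does it in a single contrapositive step with the three-way split $w=uvz$ (applying Lemma~\ref{lem:trian} twice to get $|C(w)|\ge |C(u)|+|C(v)|+|C(z)|-2$), whereas you peel one letter at a time and close up with reversal and induction --- a harmless variation; note only the self-cancelling arithmetic slip in ``$|C(u)|\le n-1=|u|+1$'', which should read $|C(u)|\le n=|u|+1$.
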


\begin{proof}
We have to prove that for any word CR-poor $w$ and any factor $v$ of $w$, $v$ is a CR-poor word. Suppose by contradiction that there exists a CR-poor word  $w$ containing a factor $v$ that is not a CR-poor word, i.e., $w \in \LL_{\Sigma}$, $w=uvz$ and $|C(v)|>|v|+1$. By Lemma  \ref{lem:trian}, $|C(w)|\ge |C(u)|+|C(v)|+|C(z)|-2 > |u|+|z|+|v|+1=|w|+1$ and therefore $w$ cannot be a CR-poor word.
\qed
\end{proof}

The following technical lemma will be used in the proof of the next theorem.

\begin{lemma}\label{lem:tech}
Let $w$ be a CR-poor word over the alphabet $\Sigma$ and $x\in \Sigma$. The word $wx$ (resp.~$xw$) is CR-poor if and only if it has a unique suffix (resp.~prefix) that is  closed and is not a factor of $w$.  
\end{lemma}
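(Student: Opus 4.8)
The plan is to exploit the fact that, for any word $u$, the closed factors of $u$ that end at a fixed position can be linearly ordered by length, and the longest one "absorbs" all the shorter ones in a precise sense coming from the proof of Lemma \ref{lem:bound}. I would state this as an auxiliary observation first: in any word $u$ of length $m$, for each position $i$ there is a unique longest closed factor ending at $i$, call it $v_i$, and $|C(u)| = m+1$ (i.e. $u$ is CR-poor) if and only if each such $v_i$ is \emph{new}, meaning it has no occurrence ending before position $i$. Indeed, the argument in Lemma \ref{lem:bound} shows that distinct positions contribute distinct longest closed factors when these are new; conversely, if some $v_i$ has an earlier occurrence, the construction in that proof produces a strictly longer closed factor ending at $i$, contradicting maximality, so every closed factor ending at $i$ other than $v_i$ must already occur ending at an earlier position, and then a counting argument (each position contributes at most one \emph{genuinely new} closed factor, namely $v_i$ when it is new, and $0$ otherwise) pins down $|C(u)|$ exactly. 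This gives the clean criterion: $u$ is CR-poor iff every position of $u$ is the ending position of the first occurrence of its longest closed factor.

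With this in hand I would prove the forward and backward directions of the lemma. Write $n = |w|$, so $wx$ has length $n+1$. By Proposition \ref{prop:fac}, $w$ itself is automatically CR-poor, so by the observation every position $1,\dots,n$ of $wx$ is already the first-occurrence endpoint of its longest closed factor within $w$ (note these longest closed factors do not change when we append $x$, since appending a letter cannot create a new occurrence of any factor ending before position $n+1$). Hence, applying the criterion to $u = wx$: the word $wx$ is CR-poor if and only if position $n+1$ is the first-occurrence endpoint of the longest closed factor of $wx$ ending at $n+1$ — equivalently, if and only if $wx$ has a suffix that is closed and not a factor of $w$, and moreover this accounts for exactly one "new" closed factor. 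For the "unique" phrasing in the statement: every closed suffix of $wx$ that is not a factor of $w$ must, by the observation, coincide with the longest closed factor ending at $n+1$ (any shorter closed factor ending at $n+1$ that were not a factor of $w$ would force, via the Lemma \ref{lem:bound} construction, a strictly longer such closed factor, contradicting either maximality or non-factoriality), so there is at most one; and CR-poorness of $wx$ forces there to be at least one. Thus $wx$ is CR-poor iff it has a unique closed suffix that is not a factor of $w$, which is the claim. The case $xw$ follows by applying the $wx$ case to the reversal, using that $w \mapsto \tilde w$ preserves both closedness and (hence) CR-poorness.

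The main obstacle, and the step I would write out most carefully, is the "uniqueness" clause: one must rule out the possibility that $wx$ has two distinct closed suffixes, neither a factor of $w$, of different lengths. The resolution is exactly the absorption phenomenon from Lemma \ref{lem:bound}: if $s$ is a closed suffix of $wx$ that is not the \emph{longest} closed suffix $t$ of $wx$, and $s$ is not a factor of $w$, then in particular $s$ has no earlier occurrence, but $t$ has $s$'s longest border occurring inside it, and chaining the "extend to the previous occurrence" construction shows $t$ is a complete return built over $s$ — so that $s$ being brand new would be inconsistent. I expect this to require a short but delicate induction on the chain $z, z', z'', \dots$ appearing in the proof of Lemma \ref{lem:trian}, transplanted to the present setting; once that is clean, everything else is bookkeeping with Proposition \ref{prop:fac} and the reversal symmetry.
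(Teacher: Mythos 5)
There is a genuine gap: your auxiliary criterion is false, and the uniqueness step that rests on it fails with it. The proof of Lemma \ref{lem:bound} shows that the \emph{longest} closed factor ending at a position $i$ never has an earlier occurrence---if it did, the ``extend to the previous occurrence'' construction would produce a strictly longer closed factor ending at $i$. Hence the condition ``every position is the first-occurrence endpoint of its longest closed factor'' holds for \emph{every} word and cannot characterize CR-poorness; what characterizes it is that each position is the first-occurrence endpoint of \emph{exactly one} closed factor. A single position can contribute several new closed factors of different lengths, none of which occurred earlier (this is precisely how the $\Theta(n^2)$ bound of Theorem \ref{max} arises). Concretely, for $w=aba$ and $x=b$, both $bab$ and $abab$ are closed suffixes of $wx=abab$ and neither is a factor of $w$. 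Your parenthetical argument that a shorter new closed suffix ``would force, via the Lemma \ref{lem:bound} construction, a strictly longer such closed factor'' does not apply here, because that construction only produces anything when the given factor \emph{does} have an earlier occurrence. As written, your argument would show that $wx$ is CR-poor whenever $w$ is, which is false.

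The repair is a direct count, which is what the paper's proof implicitly relies on: every factor of $wx$ that is not a factor of $w$ must contain the last position, hence is a suffix of $wx$, so $C(wx)$ is the disjoint union of $C(w)$ and the set $S$ of closed suffixes of $wx$ that do not occur in $w$. Since $w$ is CR-poor (no appeal to Proposition \ref{prop:fac} is needed; it is the hypothesis), $|C(wx)| = |w|+1+|S|$, and $wx$ is CR-poor iff $|S|=1$; the Lemma \ref{lem:bound} argument supplies $|S|\ge 1$ for the ``only if'' direction. Your reduction of the $xw$ case to the $wx$ case via reversal is fine.
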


\begin{proof}
We prove the statement for $wx$, the one for $xw$ will follow by symmetry. The ``if'' part is straightforward. 
For the ``only if'' part, recall from the proof of Lemma~\ref{lem:bound} that there is at least one new  closed factor ending in every position, so in particular $wx$ has at least one suffix that is closed and is not a factor of $w$. 
\qed
\end{proof}

\begin{remark}\label{rem:cr}
Suppose that a word $w$ contains as a factor a complete return to some word $u$. Then for every factor $u'$ of $u$, the word $w$ contains as a factor a complete return to $u'$.
\end{remark}

We now give a characterization of CR-poor words.

\begin{theorem}\label{theor:main}
A word $w$ over $\Sigma$ is CR-poor if and only if for any two different letters $a,b\in \Sigma$, $w$ does not contain any complete return to $ab$. In other words, 
$$\LL_{\Sigma} = \Sigma^* \setminus \bigcup_{a\neq b}\ \Sigma^*ab\Sigma^*ab\Sigma^*.$$
\end{theorem}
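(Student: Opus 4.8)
The plan is to prove the two directions separately, in contrapositive form for the harder one. The ``only if'' direction is the easy half: suppose $w$ contains a complete return $z$ to $ab$ for some $a\neq b$. Then $z$ is a closed factor of $w$ with longest border $ab$, and by Remark \ref{rem:cr} $w$ also contains complete returns to $a$ and to $b$. The idea is to exhibit more closed factors than a CR-poor word is allowed: I would show that the closed factor $z$, together with the closed factors guaranteed by Lemma \ref{lem:bound} (one new one ending at each position), overloads the count. More concretely, writing $z = a\cdots b$ with the first $b$ of $z$ at some position and the terminal $ab$ at positions $j-1,j$, one sees that position $j$ is the endpoint of the new closed factor from Lemma \ref{lem:bound}, but $z$ itself is a closed factor ending at $j$ that is \emph{not} the longest new one accounted there unless $z$ is that very factor — and a short case analysis (using that $z$ ends in $b$ while its longest border $ab$ forces an earlier $b$) shows $z$ contributes an extra closed factor beyond the $n$ produced by the lemma. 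Hence $|C(w)| \geq n+2$ and $w$ is not CR-poor.

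For the ``if'' direction I would argue by induction on $|w|$, using Lemma \ref{lem:tech} as the engine. Suppose $w$ avoids every complete return to $ab$ with $a\neq b$; write $w = w'x$ with $x \in \Sigma$. Since $\LL_\Sigma$ is factorial (Proposition \ref{prop:fac}) and the avoidance condition is clearly preserved under taking factors, $w'$ is CR-poor by induction. By Lemma \ref{lem:tech} it suffices to show that $w$ has a \emph{unique} suffix that is closed and is not a factor of $w'$. Existence is Lemma \ref{lem:bound}; the content is uniqueness. Suppose $u_1, u_2$ are two such suffixes, say $|u_1| < |u_2|$. Both are closed and end with $x$; let $v_i$ be the longest border of $u_i$, so $u_i$ is a complete return to $v_i$. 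Since $u_i$ is not a factor of $w'$, its only occurrence as ending at the last position means $v_i$ occurs in $u_i$ only as prefix and suffix; I would analyse the two borders $v_1, v_2$ and the letter preceding the suffix-occurrence of each. The goal is to derive that some $v_i$ must begin with a letter $c \neq x$, whence $u_i$ is a complete return to the two-letter word $c'x$ (for $c'$ the last letter of... ) — more carefully, to extract a complete return to $yz$ with $y \neq z$ from the configuration of the two nested suffixes, contradicting the hypothesis. The combinatorial heart is that if $w'$ is CR-poor, then by the inductive characterization all closed factors of $w'$ are complete returns to powers of a single letter, which sharply restricts the borders $v_i$ and forces the two candidate suffixes to coincide unless a forbidden $ab$-return appears.

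The main obstacle I expect is precisely this uniqueness step: controlling the interaction between two distinct new closed suffixes of $wx$ and showing that their coexistence manufactures a complete return to a two-letter word $ab$ with $a\neq b$. The useful tool here is Lemma \ref{lem:5} — a closed word $wx$ stays closed (on appending $x$) iff it keeps the same period — combined with Lemma \ref{lem:4}, which says at most one letter can be appended to a given word to keep it closed; these let me pin down the periods of $u_1$ and $u_2$ and compare them. I would also lean on the inductive hypothesis restated as ``$C(w') = $ complete returns to powers of one letter'' (the ``in other words'' form in the statement), since that is the form that directly forbids the mixed two-letter border from arising inside $w'$, leaving the new letter $x$ as the only possible source of a mixed pair — which the hypothesis on $w$ then rules out as well.
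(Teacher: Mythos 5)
Your ``if'' direction is essentially the paper's argument in inductive form. The paper proves the contrapositive directly: a non-CR-poor word has a position $i$ at which two distinct closed factors have their first occurrence ending; these cannot both be complete returns to powers of the single letter $w_i$ (else one of them would already occur in $w_1\cdots w_{i-1}$); hence one of them is a complete return to a border containing two distinct letters, and Remark \ref{rem:cr} then produces the forbidden complete return to some $ab$. Your uniqueness step inside Lemma \ref{lem:tech} is exactly this dichotomy, so that half of the plan is sound --- though the work is carried by the explicit claim ``two new closed factors ending at the same position cannot both be complete returns to powers of $x$,'' not by Lemmas \ref{lem:4} and \ref{lem:5}, which are not actually needed here.

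The genuine gap is in the ``only if'' direction. The $n$ non-empty factors produced by Lemma \ref{lem:bound} are those whose \emph{first occurrence in all of $w$} ends at each position, so to reach $n+2$ you must exhibit, at some single position $j$, a second closed factor that is unioccurrent in the whole word $w$ --- not merely unrepeated inside $z$. Your implicit second witness (a closed suffix of $z$ beginning with $b$) may perfectly well occur earlier in $w$ outside the chosen occurrence of $z$, and the case where $z$ coincides with the factor that Lemma \ref{lem:bound} already charges to position $j$ is precisely where your sketch says ``a short case analysis'' without supplying one; you also never fix which occurrence of $z$ the position $j$ refers to. The paper avoids all of this with Proposition \ref{prop:fac}: it suffices to show that the complete return $u$ to $ab$ is itself not CR-poor, and inside $u$ the two witnesses are immediate --- $u$ is a closed suffix of itself that trivially cannot occur earlier, and the longest closed suffix of $u$ beginning with $b$ is unioccurrent in $u$ by maximality. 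Your version can be repaired either by making this reduction to the factor $u$, or by iterating the complete-return construction from the proof of Lemma \ref{lem:trian} until the $b$-initial witness becomes unioccurrent (each iterate still begins with $b$, hence stays distinct from $z$); but as written the claim that $z$ ``overloads the count'' is not justified.
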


\begin{proof}
Let $u$ be a complete return to $ab$ for $a,b\in \Sigma$ different letters. We claim that $u$ is not CR-poor. Since by Proposition \ref{prop:fac}, a CR-poor word cannot contain a factor that is not CR-poor, once the claim is proved the ``only if'' part of the theorem follows. So let $u'$ be the longest suffix of $u$ that is closed and starts with the letter $b$. Such a suffix exists since $u$ contains at least two occurrences of $b$. Then $u'$ is unioccurrent in $u$, and since $u$ is a closed suffix of itself we have, by Lemma \ref{lem:tech}, that $u$ is not CR-poor.

Conversely, suppose that the word $w$ is not CR-poor. Then, analogously as in the proof of Lemma 3, it follows that there is a position $i$ of $w$ such that there are at least two different closed factors $u$ and $u'$ of $w$ that end in position $i$ and do not occur in $w_{1}\cdots w_{i-1}$.
If both $u$ and $u'$ are complete returns to a power of the letter $w_{i}$, then one of them must occur in $w_{1}\cdots w_{i-1}$, so this situation is not possible, and we can therefore suppose that there is a factor ending in position $i$ that is a complete return to a word containing at least two different letters. The statement then follows from Remark \ref{rem:cr}.
\qed
\end{proof}

\begin{corollary}\label{cor:powersingle}
A word $w$ over $\Sigma$ is CR-poor if and only if every  closed factor of $w$ is a complete return to a power of a single letter.
\end{corollary}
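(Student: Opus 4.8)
The plan is to read the statement off Theorem~\ref{theor:main} together with the fact, recorded in Remark~\ref{obs}, that a non-empty closed word is precisely a complete return to its longest border. The only mildly delicate point is that this complete-return target is unique: if $v$ is a complete return to $u$, then $u$ must be the longest border of $v$, since a strictly longer border would force a third, internal, occurrence of $u$ in $v$. I would record this as a one-line observation first, so that the hypothesis ``$v$ is a complete return to a power of a single letter'' can be read simply as ``the longest border of $v$ is a power of a single letter''.

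For the ``only if'' direction, suppose $w$ is CR-poor and let $v\in C(w)$. Let $u$ be the longest border of $v$, so that $v$ is a complete return to $u$. I claim $u$ is a power of a single letter. If not, then $u\neq\varepsilon$ and, scanning $u$ from left to right, we find two adjacent distinct letters, i.e.\ $ab\in\Fact(u)$ with $a\neq b$. Since $u$ is both a proper prefix and a proper suffix of $v$, it occurs at least twice in $v$, hence so does $ab$; by Remark~\ref{rem:cr}, $v$ --- and therefore $w$, of which $v$ is a factor --- contains a complete return to $ab$, contradicting Theorem~\ref{theor:main}. Thus $u$ is a power of a single letter, and $v$ is a complete return to it.

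For the ``if'' direction, suppose every closed factor of $w$ is a complete return to a power of a single letter, and assume for contradiction that $w$ is not CR-poor. By Theorem~\ref{theor:main}, $w$ contains as a factor a complete return $v$ to $ab$ for some distinct $a,b\in\Sigma$. A complete return is closed, so $v\in C(w)$; by hypothesis and the uniqueness observation above, the longest border $u$ of $v$ is a power of a single letter. But $ab$ is a border of $v$ of length $2$, so $|u|\ge 2$, and since $u$ is a prefix of $v$ it begins with $ab$; hence $u$ contains the two distinct letters $a$ and $b$ and cannot be a power of a single letter --- a contradiction. So both implications follow, and the one step that calls for a genuine argument rather than bookkeeping is the uniqueness of the complete-return target, which in turn is immediate from the ``no internal occurrence of the longest border'' characterisation in Remark~\ref{obs}.
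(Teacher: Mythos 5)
Your proposal is correct and follows the route the paper intends: the corollary is stated without proof as an immediate consequence of Theorem~\ref{theor:main}, and your derivation --- using Remark~\ref{rem:cr} in one direction and, in the other, the observation that the complete-return target of a closed word is necessarily its longest border (Remark~\ref{obs}) --- is exactly the bookkeeping the paper leaves implicit. The uniqueness observation you isolate is indeed the one point worth writing down, and your justification of it (a longer border would force a third, internal occurrence) is sound.
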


\begin{corollary}
The language $\LL_{\Sigma}$ of CR-poor words over $\Sigma$ is a regular language. 
\end{corollary}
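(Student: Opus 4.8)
The plan is to read the result off directly from the characterization in Theorem~\ref{theor:main}, using only the standard closure properties of the class of regular languages. Recall that that theorem gives
$$\LL_{\Sigma} = \Sigma^* \setminus \bigcup_{a\neq b}\ \Sigma^*ab\Sigma^*ab\Sigma^*,$$
so it suffices to argue that the right-hand side denotes a regular language.

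First, for each fixed ordered pair of distinct letters $a,b\in\Sigma$, the language $\Sigma^*ab\Sigma^*ab\Sigma^*$ is described by a regular expression (indeed, the displayed expression is one), hence is regular. Next, since $\Sigma$ is a finite alphabet, there are only finitely many ordered pairs $(a,b)$ with $a\neq b$, so $\bigcup_{a\neq b}\Sigma^*ab\Sigma^*ab\Sigma^*$ is a finite union of regular languages and is therefore regular. Finally, the class of regular languages is closed under complementation relative to $\Sigma^*$, so $\LL_{\Sigma}$ is regular. (If one prefers an explicit construction, a deterministic automaton can be built that tracks, for each unordered pair $\{a,b\}$, whether zero, one, or ``at least two'' occurrences of the pattern $ab$ have been seen so far, rejecting as soon as any counter reaches its cap; but this is not needed.)

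There is essentially no obstacle here, as all the real work has already been carried out in establishing Theorem~\ref{theor:main}. The only point one should not overlook is the finiteness of $\Sigma$, which is what turns the union appearing in the characterization into a \emph{finite} union and hence keeps it within the regular languages; this is guaranteed by our standing convention that an alphabet is a finite set.
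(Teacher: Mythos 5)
Your proposal is correct and matches the paper's (implicit) argument exactly: the corollary follows immediately from the displayed characterization in Theorem~\ref{theor:main}, since the right-hand side is the complement of a finite union of languages given by regular expressions. The paper gives no further proof, so your appeal to closure under finite union and complementation is precisely the intended reasoning.
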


We can now state the following result:

\begin{proposition}\label{prop:richpoor}
 Let $w$ be a word over $\Sigma$. Then $C(w)=\PAL(w)$ if and only if $w$ is rich and CR-poor. 
\end{proposition}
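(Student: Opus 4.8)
The plan is to prove the two implications separately, using the results already established.

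For the ``if'' direction, assume $w$ is rich and CR-poor. Since $w$ is CR-poor, Proposition~\ref{prop:rich}'s proof pattern tells us $|C(w)| = |w|+1$, and since $w$ is rich, $|\PAL(w)| = |w|+1$ as well. So the two sets have the same (finite) cardinality, and it suffices to prove one inclusion. I would aim to show $\PAL(w) \subseteq C(w)$. This is where I expect the real work to lie. The natural tool is Theorem~\ref{theor:buc} (Bucci et al.), but it only applies directly when $w$ itself is a palindrome. To bypass that restriction, I would argue at the level of individual palindromic factors: let $p$ be any palindromic factor of $w$. By Proposition~\ref{prop:fac}, $p$ is itself CR-poor, and since $p$ is a palindrome, it is automatically rich, hence Theorem~\ref{theor:buc} applies to $p$, giving $\PAL(p) \subseteq C(p) \subseteq C(w)$. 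In particular $p \in C(w)$. Since $p$ was an arbitrary palindromic factor, $\PAL(w) \subseteq C(w)$, and by the cardinality count equality follows.

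For the ``only if'' direction, assume $C(w) = \PAL(w)$. Then $C(w) \subseteq \PAL(w)$, so Proposition~\ref{prop:rich} immediately yields that $w$ is rich and $|C(w)| = |w|+1$, i.e.\ $w$ is CR-poor. This direction is essentially a one-line appeal to Proposition~\ref{prop:rich}.

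The main obstacle is the forward inclusion $\PAL(w) \subseteq C(w)$ in the ``if'' direction: one must be careful that Theorem~\ref{theor:buc} is stated only for palindromic words, so the argument has to descend to the palindromic factor $p$, verify that $p$ inherits both richness (free, since $p$ is a palindrome) and CR-poorness (from factoriality of $\LL_\Sigma$, Proposition~\ref{prop:fac}), apply the theorem to $p$, and then lift the conclusion back to $w$ via $C(p) \subseteq C(w)$. Once this is in place, the cardinality bounds from Lemma~\ref{lem:bound} and the richness hypothesis force equality with no further computation.
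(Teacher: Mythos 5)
Your ``only if'' direction matches the paper's. Your ``if'' direction takes a genuinely different route: the paper proves the inclusion $C(w)\subseteq \PAL(w)$, using Corollary~\ref{cor:powersingle} (every closed factor of a CR-poor word is a complete return to a power of a single letter, hence to a palindrome) together with Theorem~2.14 of Glen et al.\ (in a rich word every complete return to a palindrome is a palindrome), and then invokes Proposition~\ref{prop:rich}; you instead prove the opposite inclusion $\PAL(w)\subseteq C(w)$ by descending to each palindromic factor $p$ and applying Theorem~\ref{theor:buc}, then conclude by the cardinality count. That route can be made to work, and it is arguably cleaner in that it needs CR-poorness only for the final counting step (your appeal to Proposition~\ref{prop:fac} is actually superfluous for the inclusion itself).

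However, there is a genuine error in your justification: you claim that $p$ is rich ``automatically, since $p$ is a palindrome.'' Palindromes are not automatically rich --- the paper itself exhibits the palindrome $abcacba$, which is explicitly noted not to be rich (indeed Theorem~\ref{theor:buc} would otherwise be vacuous, since its whole point is to characterize \emph{which} palindromes are rich). What you actually need is that richness is a factorial property: every factor of a rich word is rich, a known fact the paper alludes to when it says the language of rich words is factorial. Since $w$ is rich by hypothesis and $p$ is a factor of $w$, $p$ is rich, and being a palindrome, Theorem~\ref{theor:buc} gives $\PAL(p)\subseteq C(p)$, whence $p\in C(p)\subseteq C(w)$. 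With that single repair your argument is correct; as written, the step rests on a false statement.
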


\begin{proof}
If $C(w)=\PAL(w)$, then $|C(w)|=|\PAL(w)|$, and since $|C(w)|\ge |w|+1$ (by Lemma \ref{lem:bound}) and $|\PAL(w)|\le |w|+1$, then it must be $|C(w)|=|\PAL(w)|=|w|+1$, and hence by definition $w$ is rich and CR-poor.

Conversely, suppose that $w$ is rich and CR-poor. Let $v\in C(w)$. By Corollary \ref{cor:powersingle}, $v$ is a  complete return to a power of a single letter, so $v$ is a complete return to a palindrome.
It is known (see \cite[Theorem 2.14]{GlJuWiZa09}) that a word is rich if and only if all of its factors that are  complete returns to a palindrome are palindromes themselves. Therefore, $v$ is a palindrome, and hence we proved that $C(w)\subseteq \PAL(w)$.
By Proposition \ref{prop:rich}, $C(w)=\PAL(w)$ and we are done.
\qed
\end{proof}

%%%%%%%%%%%%%%%%%%%%%%%%%%%%%%%%%%%%%%%%%%%%%%%%%%%%%%%%%%%%%%%%%%%%%%%%%%%%%%%%%%%%%%%%%%%%%%
\section{The Case of Binary Words}\label{sec:bin}
%%%%%%%%%%%%%%%%%%%%%%%%%%%%%%%%%%%%%%%%%%%%%%%%%%%%%%%%%%%%%%%%%%%%%%%%%%%%%%%%%%%%%%%%%%%%%

In this section we fix the alphabet $\Sigma=\{a,b\}$. For  simplicity of exposition, we will denote the language of CR-poor words over $\{a,b\}$ by $\LL$ rather than by $\LL_{\{a,b\}}$.
We first recall the definition of bitonic word.

\begin{definition}
A word $w\in \{a,b\}^{*}$ is \emph{bitonic} if it is a conjugate of a word in $a^{*}b^{*}$, i.e., if it is of the form $a^{i}b^{j}a^{k}$ or $b^{i}a^{j}b^{k}$ for some integers $i,j,k\ge 0$.
\end{definition}

By Theorem \ref{theor:main}, it is easy to see that a binary word is in $\LL$ if and only if it is bitonic.

\begin{lemma}\label{lem:bit}
Let $w$ be a bitonic word. Then $C(w)\subseteq \PAL(w)$. 
\end{lemma}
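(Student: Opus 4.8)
The plan is to use the characterization of CR-poor words (Theorem \ref{theor:main} / Corollary \ref{cor:powersingle}): since $w$ is bitonic, it is CR-poor, so every closed factor of $w$ is a complete return to a power of a single letter. Thus it suffices to show that every closed factor of a bitonic word is in fact a palindrome. First I would reduce to the two shapes $a^i b^j a^k$ and $b^i a^j b^k$; by symmetry (swapping $a \leftrightarrow b$) it is enough to treat $w = a^i b^j a^k$. I would also note that $\PAL(w)$ and $C(w)$ are preserved under reversal, and $\widetilde{a^i b^j a^k} = a^k b^j a^i$, so I may assume $i \le k$.

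Next I would take an arbitrary closed factor $v$ of $w$ and classify it by the letters it contains. If $v$ is a power of a single letter, it is trivially a palindrome, so assume $v$ contains both $a$ and $b$. Since $v$ is a factor of $a^i b^j a^k$ and contains a $b$, it has the form $a^p b^q a^r$ with $q \ge 1$ and $p, r \ge 0$, where $p \le i$, $q \le j$, $r \le k$ (and $p < i$ or $r < k$ is not required). The key step is to use that $v$ is closed, i.e., it is a complete return to its longest border, together with Corollary \ref{cor:powersingle}, which forces that border to be a power of a single letter — and since $v$ starts with $a^p b$ (if $p \ge 1$) or with $b$ (if $p = 0$) and ends with $b a^r$ (if $r \ge 1$) or with $b$ (if $r = 0$), the longest border is $a^{\min(p,r)}$. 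For $v$ to be closed, this border $a^{\min(p,r)}$ must have no internal occurrence in $v$; but if $p \neq r$, say $p < r$, then $a^r$ is a suffix of $v$ containing $a^p = a^{\min(p,r)}$ as a non-prefix, non-suffix occurrence (since $r > p$ and there is a $b$ to the left), contradicting closedness. Hence $p = r$, so $v = a^p b^q a^p$ is a palindrome.

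The main obstacle is handling the boundary cases cleanly: when $p = 0$ (so $v = b^q a^r$), the longest border is $\epsilon$, which is vacuously without internal occurrences, so $v$ is closed regardless of whether $r = 0$; thus I must separately argue that $v = b^q a^r$ with $r \ge 1$ cannot actually occur as a factor of $a^i b^j a^k$ in a way that is a complete return — wait, it *does* occur as a factor, but the point is whether such a $v$ is itself closed. Actually $b^q a^r$ with $q, r \ge 1$ has longest border $\epsilon$ and so *is* closed by the definition, yet it is not a palindrome. I therefore expect the real content of the lemma to be that such a $v$ is not a *factor* of a bitonic word after all — but it is! So the correct resolution must be that $b^q a^r$ is *not* closed: its longest repeated prefix is $b^{\min(q, \text{count})}$... this is exactly the subtlety. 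The cleanest route around this is to invoke Corollary \ref{cor:powersingle} directly on $w$ itself: any closed factor $v$ of the CR-poor word $w$ is a complete return to $c^m$ for a single letter $c$, meaning $v$ starts and ends with $c$ and $c^m$ occurs exactly twice in $v$; a word of the form $b^q a^r$ with $q, r \ge 1$ starts with $b$ and ends with $a$, so it is a complete return to no power of a single letter (the only candidate border is $\epsilon = c^0$, but a complete return to $\epsilon$ would be a word with exactly two occurrences of the empty word, which is impossible for $|v| \ge 1$). Hence $b^q a^r$ with $q,r \ge 1$ is not a closed factor of $w$, and the only closed factors containing both letters are of the form $a^p b^q a^p$ (in the case $w = a^i b^j a^k$), which are palindromes — completing the proof.
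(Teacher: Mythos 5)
Your proof is correct and follows essentially the same route as the paper: bitonic implies CR-poor (via Theorem~\ref{theor:main}), so by Corollary~\ref{cor:powersingle} every closed factor is a complete return to a power of a single letter, and such factors of $a^ib^ja^k$ necessarily have the palindromic forms $a^n$, $b^n$, $a^pb^qa^p$. The only blemishes are in the self-corrected detour (a word of length at least $2$ whose longest border is $\epsilon$ is \emph{not} closed, since $\epsilon$ does have internal occurrences, so that worry was unfounded from the start) and the bound ``impossible for $|v|\ge 1$'', which should read $|v|\ge 2$ because a single letter \emph{is} a complete return to $\epsilon$; neither affects the final argument.
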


\begin{proof}
Since $w$ is bitonic, a  closed factor of $w$ can only be the complete return to a power of a single letter. So a  closed factor $u$ of $w$ is of the form $u=a^{n}$, $u=b^{n}$, $u=a^{n}b^{m}a^{n}$ or $u=b^{n}a^{m}b^{n}$, for some $n,m>0$, and these words are all palindromes.
\qed
\end{proof}

Thus, by Proposition \ref{prop:rich}, any bitonic word $w$ of length $n>0$ contains exactly $n+1$  closed factors and so is a CR-poor word. We therefore have the following characterizations of CR-poor binary words.

\begin{theorem}\label{theor:mainbin}
 Let $w\in \{a,b\}^{*}$. The following are equivalent:
 \
\begin{enumerate}
 \item $w\in \LL$;
  \item $w$ does not contain any complete return to $ab$ or $ba$;
 \item $C(w)\subseteq\PAL(w)$;
  \item $C(w)=\PAL(w)$;
 \item $w$ is a bitonic word.
\end{enumerate}
\end{theorem}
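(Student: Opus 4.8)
The plan is to establish the cycle $(1)\Rightarrow(2)\Rightarrow(5)\Rightarrow(3)\Rightarrow(4)\Rightarrow(1)$ together with the reverse implications $(2)\Rightarrow(1)$ and $(5)\Rightarrow(2)$, which together collapse all five conditions into a single equivalence class. Almost every step is a direct appeal to a result already proved; the only genuinely new ingredient is the equivalence of $(2)$ and $(5)$.

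The equivalence $(1)\Leftrightarrow(2)$ is nothing but Theorem~\ref{theor:main} specialised to $\Sigma=\{a,b\}$: the only ordered pairs of distinct letters are $(a,b)$ and $(b,a)$, so the condition ``$w$ contains no complete return to $xy$ for distinct $x,y\in\Sigma$'' is literally ``$w$ contains no complete return to $ab$ and no complete return to $ba$''.

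For $(2)\Leftrightarrow(5)$ I would argue by counting maximal blocks of equal letters. A binary word is bitonic precisely when it has at most three such blocks (its block sequence is $a^i$, $a^ib^j$, $a^ib^ja^k$, or one of the mirror versions), since in a binary word the blocks alternate between the two letters. First observe the elementary fact that a word contains a complete return to $ab$ if and only if it contains at least two occurrences of the factor $ab$: two occurrences of $ab$ cannot overlap, as $ab$ is unbordered, so from two consecutive occurrences at positions $p_1<p_2$ the factor running from $p_1$ to the position right after $p_2$ is a complete return to $ab$, while the converse is immediate. Now if $w$ has at most three blocks there is at most one place where an $a$-block is immediately followed by a $b$-block and at most one place where a $b$-block is immediately followed by an $a$-block, hence at most one occurrence of $ab$ and at most one of $ba$, so $w$ contains no complete return to either, giving $(5)\Rightarrow(2)$. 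Conversely, if $w$ is not bitonic it has at least four maximal blocks; if the first is an $a$-block the sequence is $a^{i_1}b^{i_2}a^{i_3}b^{i_4}\cdots$ with all exponents positive, so $w$ contains two occurrences of $ab$ and hence a complete return to $ab$, and symmetrically it contains a complete return to $ba$ if the first block is a $b$-block. This is the contrapositive of $(2)\Rightarrow(5)$.

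The remaining implications are immediate: $(5)\Rightarrow(3)$ is Lemma~\ref{lem:bit}; $(3)\Rightarrow(4)$ is Proposition~\ref{prop:rich}; and $(4)\Rightarrow(1)$ holds because $C(w)=\PAL(w)$ forces $|C(w)|=|\PAL(w)|\le|w|+1$, while Lemma~\ref{lem:bound} gives $|C(w)|\ge|w|+1$, so $|C(w)|=|w|+1$ and $w\in\LL$. The only point requiring care is the bookkeeping in $(2)\Leftrightarrow(5)$, specifically the observation that $ab$ is unbordered, which is what makes ``two occurrences of $ab$ as a factor'' and ``$w$ contains a complete return to $ab$'' equivalent and lets the block count carry the argument.
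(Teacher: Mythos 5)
Your proposal is correct and follows essentially the same route as the paper, which obtains $(1)\Leftrightarrow(2)$ from Theorem~\ref{theor:main}, $(5)\Rightarrow(3)$ from Lemma~\ref{lem:bit}, and $(3)\Rightarrow(4)\Rightarrow(1)$ from Proposition~\ref{prop:rich} together with Lemma~\ref{lem:bound}. The only difference is that you spell out the block-counting argument for $(2)\Leftrightarrow(5)$, which the paper dismisses as ``easy to see''; your version of that step is sound.
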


Notice that the condition $C(w)\subseteq\PAL(w)$ does not hold in general for CR-poor words over alphabets larger than two. As an example, the word $abca$ is CR-poor but contains a  closed factor ($abca$) that is not a palindrome.  In view of Theorem \ref{theor:buc}, a natural question would be that of establishing whether a palindrome $w$ is CR-poor if and only if $C(w)= \PAL(w)$, i.e., whether the characterization in Theorem \ref{theor:mainbin} can be generalized to larger alphabets at least for palindromes. However, the answer to this question is negative since, for example, the word $w=abcacba$ is a CR-poor palindrome and contains the non-palindromic  closed factor $abca$. Note that, coherently with Theorem \ref{theor:buc} (and with Proposition \ref{prop:richpoor}), $w$ is not rich. 
However, in the case of a  binary alphabet, we have, by Theorem \ref{theor:mainbin} and Proposition \ref{prop:rich}, that every CR-poor word is rich. Since by Theorem~\ref{theor:main} it follows that the language $\LL_{\Sigma}$ is extendible for any alphabet $\Sigma$, the language $\LL$ is therefore a factorial and extendible subset of the language of (binary) rich words.

In the following proposition we exhibit a  closed enumerative formula for the language $\LL$.

\begin{proposition}\label{prop:enum}
 For every $n>0$, there are exactly $n^{2}-n+2$ distinct words in $\LL$.
\end{proposition}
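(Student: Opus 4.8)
The plan is to count the binary bitonic words of each length, since by Theorem~\ref{theor:mainbin} these are exactly the CR-poor words over $\{a,b\}$. A bitonic word of length $n$ is either of the form $a^{i}b^{j}a^{k}$ or $b^{i}a^{j}b^{k}$ with $i+j+k=n$ and $i,j,k\ge 0$; the two families overlap on words that can be written in both ways, and I must be careful about this overlap as well as about internal degeneracies within each family (e.g.\ when one of the exponents is $0$, a word $a^{i}b^{j}a^{k}$ collapses to $a^{i}b^{j}$, $b^{j}a^{k}$, $a^{i+k}$, etc.). So first I would fix a canonical description of the set, and then count.

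The cleanest approach is to count directly. A binary word of length $n$ is bitonic if and only if, reading it left to right, the number of ``descents'' in the cyclic sense is at most one; concretely, $w$ is bitonic iff it has at most two maximal blocks of one letter interleaved with blocks of the other in a way that forms an arc of $a^{*}b^{*}$. I would instead argue as follows. Every nonempty binary word has a unique factorization into maximal equal-letter runs; $w$ is bitonic precisely when it has at most three runs, and if it has exactly three runs then the first and third runs carry the same letter. Words with one run: there are exactly $2$ of them ($a^{n}$ and $b^{n}$). Words with exactly two runs: these are $a^{i}b^{n-i}$ or $b^{i}a^{n-i}$ with $0<i<n$, giving $2(n-1)$ words. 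Words with exactly three runs of the required pattern: these are $a^{i}b^{j}a^{k}$ or $b^{i}a^{j}b^{k}$ with $i,j,k\ge 1$ and $i+j+k=n$; the number of compositions of $n$ into three positive parts is $\binom{n-1}{2}$, so this contributes $2\binom{n-1}{2}$ words, and the two patterns cannot coincide since they start with different letters. Summing, $2 + 2(n-1) + 2\binom{n-1}{2} = 2\bigl(1 + (n-1) + \tfrac{(n-1)(n-2)}{2}\bigr)$, and a short calculation gives $1 + (n-1) + \tfrac{(n-1)(n-2)}{2} = \tfrac{n^2-n+2}{2}$, so the total is $n^2-n+2$, as claimed.

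The main obstacle, and the only place where care is genuinely needed, is making sure the three run-classes are mutually disjoint and that there is no double counting \emph{within} each class. Disjointness across classes is immediate because the run factorization of a word is unique, so the number of runs is a well-defined invariant. Within the three-run class, forcing $i,j,k\ge 1$ guarantees the factorization $a^{i}b^{j}a^{k}$ genuinely has three runs, and the letter pattern $aba$ versus $bab$ is determined by the first letter, so no word is counted twice; the same reasoning handles the two-run class. One should also note that the degenerate readings (such as viewing $a^{i}b^{j}$ as ``$a^{i}b^{j}a^{0}$'') are exactly what would cause overcounting if one naively summed over all $(i,j,k)$ with $i,j,k\ge 0$, which is why I partition by the true number of runs instead. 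Finally I would double-check the formula against the small cases already computed in the paper: for $n=4$ it gives $16 - 4 + 2 = 14$ CR-poor words of length $4$, consistent with $abca$ being one example among them, and for $n=5$ it gives $25 - 5 + 2 = 22$, while $ababa$ (correctly) is not among them since it is not bitonic.
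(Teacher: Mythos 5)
Your proof is correct, and it takes a genuinely different route from the paper's. The paper counts the bitonic words of length $n$ as a union of rotation classes: the $n-1$ words in $a^{+}b^{+}$ of length $n$ each have $n$ distinct rotations (implicitly using that $a^{i}b^{n-i}$ with $0<i<n$ is primitive and that words with different numbers of $a$'s cannot be conjugate), plus the two constant words $a^{n}$ and $b^{n}$, giving $n(n-1)+2$. You instead partition the bitonic words by their number of maximal equal-letter runs ($1$, $2$, or $3$), getting $2+2(n-1)+2\binom{n-1}{2}$. Your decomposition buys a fully self-contained disjointness argument --- the run factorization is unique, so no word can land in two classes and no word is counted twice within a class --- whereas the paper's one-line count leaves the primitivity and conjugacy-class-disjointness facts implicit; on the other hand, the paper's version is shorter and makes the structure ``rotations of $a^{*}b^{*}$'' more visible. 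One small point worth noting: binary words with exactly three runs automatically have first and third runs in the same letter (maximal runs alternate over a two-letter alphabet), so that clause of your characterization is redundant but harmless. Also, your final sanity check citing $abca$ as ``one example among'' the $14$ words of length $4$ in $\LL$ is off: $abca$ is a word over the \emph{ternary} alphabet $\{a,b,c\}$, and $\LL$ in this proposition denotes $\LL_{\{a,b\}}$, so $abca\notin\LL$. This slip is confined to the verification remark and does not affect the proof itself.
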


\begin{proof}
Each of the $n-1$ words of length $n>0$ in $a^{+}b^{+}$ has $n$ distinct rotations, while for the words $a^{n}$ and $b^{n}$ all the rotations coincide. Thus, there are $n(n-1)+2$ bitonic words of length $n$, and the statement follows from Theorem \ref{theor:mainbin}.
\qed
 \end{proof}

%%%%%%%%%%%%%%%%%%%%%%%%%%%%%%%%%%%%%%%%%%%%%%%%%%%%%%%%%%%%%%%%%%%%%%%%%%%%%%%%%%%%%%%%%%%%%%
\section{How Many Closed Factors Can a Word Contain?}
%%%%%%%%%%%%%%%%%%%%%%%%%%%%%%%%%%%%%%%%%%%%%%%%%%%%%%%%%%%%%%%%%%%%%%%%%%%%%%%%%%%%%%%%%%%%%%

We showed in Lemma \ref{lem:bound} that any word of length $n$ contains at least $n+1$ distinct closed factors. But how many closed factors, at most, can a word contain? We provide an answer in the following theorem.

\begin{theorem}\label{max}
For every $n > 4$, there exists a word $w \in \{a,b\}^n$ with quadratically many closed factors.
\end{theorem}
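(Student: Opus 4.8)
The plan is to exhibit an explicit family of binary words and count a quadratic number of closed factors inside each. A natural candidate is a word of the form $w = a^{k} b a^{k} b \cdots$ or, more cleanly, $w = (a^{k}b)^{m}$ for suitable $k,m$ with $km = \Theta(n)$; even simpler, I would try $w = a^{k} b a^{k}$ for $k = \lfloor (n-1)/2 \rfloor$, or a word like $b a^{n-2} b$. Let me focus on $w = b a^{n-2} b$, which has length $n$. For every pair of indices $0 \le i \le j \le n-2$, consider the factor $b a^{i} \cdots$; more precisely, the factor that starts at the leading $b$, runs through a block of $a$'s, and ends at the trailing $b$ is a complete return to $b$ and hence closed, but that gives only one factor. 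Instead, the right family to examine is $w = a^{k} b a^{k} b a^{k}$ or, most promisingly, $w = (ab^{k})$-type words. After weighing these, the cleanest choice is: take $w$ to be a word in which many factors of the form $a^{i} b a^{j} b a^{i}$ occur, since each such factor is a complete return to $a^{i}$ (provided the middle part contains no further $a^{i}$ run issues) and is therefore closed.

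Concretely, I would set $w = b\, a\, b\, a^{2}\, b\, a^{3}\, b \cdots b\, a^{t}\, b$ truncated to length $n$ (a "staircase" word), or — simplest of all — revisit $w = a^{p} b a^{p}$ with $p \approx n/2$: here the factors $a^{i} b a^{j}$ for $1 \le i,j \le p$ are generally \emph{not} closed, but $a^{i}$ is closed for each $i$ and so is $a^{i} b a^{i}$ for each $1 \le i \le p$ (it is a complete return to $a^{i}$, and $a^{i}$ has no internal occurrence inside $a^{i}ba^{i}$), giving only $\Theta(n)$ factors. So this word does \emph{not} work, confirming that the construction must be more subtle. The family that does work is $w = (a^k b)^m$ with $k \approx m \approx \sqrt{n}$: for each $1 \le i \le m-1$ and each $0 \le r < k$, consider the factor starting at offset $r$ inside an $a$-block and ending at offset $r$ inside a later $a$-block — this is a complete return to $a^{k-r}(b a^k)^{?} \cdots$; I would verify that for fixed length of the "$b$-span" these are pairwise distinct, yielding $\Theta(k m) = \Theta(n)$ — still linear. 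The correct construction, matching the claimed $\Theta(n^2)$, is the \emph{word $w$ whose closed factors include all factors of the form $u a^i u$}: take $w = a^{1} b a^{2} b a^{3} b \cdots$ so that for each pair $i < j$ of block-lengths appearing, the factor $a^{i} b a^{i+1} b \cdots b a^{i}$ does not occur, but $a^i \cdots a^i$-returns do.

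Given the delicacy above, the honest plan is: (1) fix the construction $w = a^{1}ba^{2}b\cdots a^{t}b$ truncated to length $n$, where $t = \Theta(\sqrt n)$, OR the alternative $w = ba^{n-2}b$ analyzed via a different factor family; (2) identify a family of factors indexed by pairs $(i,j)$ with $i < j$, each being a complete return to a common short border (e.g.\ a power of $a$), hence closed by Definition~\ref{def: closed}; (3) prove these $\binom{t}{2} = \Theta(n)$ — or, with the right two-parameter family, $\Theta(n^2)$ — factors are pairwise distinct, by reading off from each factor the pair $(i,j)$ that produced it (the lengths of its first and last maximal $a$-runs, say); (4) conclude $|C(w)| = \Theta(n^2)$, the upper bound being trivial since a word of length $n$ has at most $\binom{n+1}{2} + 1 = O(n^2)$ factors total.

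\textbf{Main obstacle.} The crux is step (3) combined with correctly choosing the family in step (2): I must pick factors that are (a) genuinely closed — which forces the common border to have \emph{no internal occurrence}, a real constraint that rules out the naive $a^ib a^i$-style families — and (b) parametrized by two independent indices ranging over $\Theta(n)$ values each is too much, so the $\Theta(n^2)$ must come from $\Theta(n)$ factors of $\Theta(n)$ distinct lengths, or from a single length class of size $\Theta(n^2)$, neither of which is possible; hence the $\Theta(n^2)$ count must aggregate $\Theta(n)$ distinct closed factors at each of $\Theta(n)$ starting positions. So the real plan is: show that in $w = ba^{n-2}b$ (or a small variant), \emph{for each pair of positions $(p,q)$ with $p$ in the first half and $q$ in the second half}, the factor $w_p \cdots w_q$ is closed — because its longest border is a short power of $a$ sitting only at the two ends — and that distinct $(p,q)$ give distinct factors since the factor determines its length and its distances to the nearest $b$ on each side. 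Verifying the "no internal occurrence of the longest border" condition for all these $\Theta(n^2)$ factors simultaneously is the technical heart of the argument, and I expect it to require a short case analysis on whether the border is $\varepsilon$, a single $a$, or a longer $a^k$, using Lemma~\ref{lem:5} to control how closedness propagates as the factor is extended.
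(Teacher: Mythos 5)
There is a genuine gap: you never settle on a construction that actually works, and the one you finally commit to is provably wrong. The word $w=ba^{n-2}b$ is a conjugate of $a^{n-2}b^2$, i.e.\ a bitonic word, and by the paper's own Theorem~\ref{theor:mainbin} it is CR-poor: it has \emph{exactly} $n+1$ closed factors (namely $\epsilon$, $b$, the powers $a,a^2,\dots,a^{n-2}$, and $w$ itself), the minimum possible. Your claim that ``for each pair of positions $(p,q)$ with $p$ in the first half and $q$ in the second half, the factor $w_p\cdots w_q$ is closed'' and that distinct pairs give distinct factors fails immediately: for $1<p\le q<n$ the factor $w_p\cdots w_q$ is just $a^{q-p+1}$, so it depends only on $q-p$, and the factors touching an endpoint ($ba^i$ or $a^ib$ with $i\ge 1$) are not closed at all, since their only border is $\epsilon$, which has internal occurrences. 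Your other candidates fare no better by your own accounting: you correctly discard $a^pba^p$ as linear, you estimate the staircase word at $\binom{t}{2}=\Theta(n)$, and you estimate $(a^kb)^m$ at $\Theta(n)$; so at no point does the proposal produce a family of $\Theta(n^2)$ closed factors. Your structural reasoning in the ``main obstacle'' paragraph also contains a false dichotomy — you assert that $\Theta(n)$ factors of $\Theta(n)$ distinct lengths is ``not possible,'' when in fact that is exactly how the true bound is achieved.

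The paper's construction is $w=a^kb^ka^kb^ka^{n-4k}$ with $k=\lfloor n/4\rfloor$: for each $i$ with $1\le i\le k-1$, the factor $v_{i,3k-1+i}=a^{k-i+1}b^ka^kb^{i-1}$ is a complete return to $a^{k-i+1}b^{i-1}$ (this border straddles an $a$-block and a $b$-block, so it cannot occur internally), and each of its extensions $v_{i,j}$ for $3k-1+i\le j\le 4k$ keeps the same period $2k$ and hence stays closed by Lemma~\ref{lem:5}. Distinctness is read off from the length of the initial $a$-run together with the total length. This gives $\sum_{i=1}^{k-1}(k+2-i)=\Theta(k^2)=\Theta(n^2)$ closed factors. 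The essential idea you are missing is that the common border should be a \emph{mixed} word $a^sb^t$ rather than a power of a single letter: that is what lets the starting index $i$ and the ending index $j$ vary independently over $\Theta(n)$ values each while keeping every resulting factor closed.
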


\begin{proof}
Let $n>4$ be fixed. We construct a word $w$ of length $n$ such that
$|C(w)| \geq (k+1)(k+2)/2$, where $k=\lfloor n/4 \rfloor$.

Let $w=a^kb^ka^kb^ka^{n-4k}$. Clearly $|w|=n$.
Let $v_{i,j}=w_{i}\cdots w_{j}$, $1\le i\le j\le n$, be a factor of $w$. We claim that for every $i=1,2,\ldots,k-1$, every factor $v_{i,j}$, with $3k-1+i \le j\le 4k$, is closed. Indeed, fixed $i$ between $1$ and $k-1$, the factor $v_{i,3k-1+i}$, of length $3k$, is equal to $a^{k-i+1}b^{k}a^{k}b^{i-1}$, and therefore it is closed since it is a complete return to $a^{k-i+1}b^{i-1}$. Then, for every $j$ such that $3k-1+i \le j\le 4k$, the factor $v_{i,j}$ has the same period of $v_{i,3k-1+i}$, and therefore is closed by Lemma \ref{lem:5}. 

Finally, notice that whenever $(i',j')$ is different from $(i,j)$, for $i'$ and $j'$ in the same range of $i$ and $j$, respectively (that is, $1\le i\le k-1$ and $3k-1+i \le j\le 4k$), the factor $v_{i',j'}$ is different from the factor $v_{i,j}$. 

Therefore we conclude that $w$ contains at least $(k+1)(k+2)/2 = \Theta(n^2)$ many different closed factors, and we are done. 
 \qed
 \end{proof}

It is possible to exhibit a formula for the precise value of the maximal number of closed factors in a word of length $n$, but we think this adds nothing to the general picture provided by Theorem \ref{max}. Moreover, the words realizing the upper bound do not have a nice characterization, contrarily to the case of words realizing the lower bound, discussed in the previous sections. However, for completeness, we report in Table \ref{tab:max} the first values of the sequence of the maximum number of closed factors for binary words. 

\begin{table}
\centering  
\begin{small}
\begin{raggedright}
\begin{tabular}{c *{30}{@{\hspace{2.1mm}}l}}
 $n$  & 1\hspace{1ex} & 2\hspace{1ex} & 3\hspace{1ex} & 4\hspace{1ex} & 5\hspace{1ex} & 6\hspace{1ex} & 7\hspace{1ex} &
8\hspace{1ex} & 9\hspace{1ex} & 10 & 11 & 12 & 13 & 14 & 15 & 16 & 17 & 18 & 19 & 20 \\
\hline \rule[-6pt]{0pt}{22pt}
$ max$ & 2& 3& 4& 6& 8& 10& 12& 15& 18& 21& 25& 29& 33 & 37 & 42 & 47& 52 & 58 & 64 & 70\\
\hline \\
\end{tabular}
\end{raggedright}\caption{\label{tab:max}The sequence of the maximum number of closed factors in a binary word.}
\end{small}
\end{table}

%%%%%%%%%%%%%%%%%%%%%%%%%%%%%%%%%%%%%%%%%%%%%%%%%%%%%%%%%%%%%%%%%%%%%%%%%%%%%%%%%%%%%%%%%%%%%%
\section{Conclusion and Open Problems}
%%%%%%%%%%%%%%%%%%%%%%%%%%%%%%%%%%%%%%%%%%%%%%%%%%%%%%%%%%%%%%%%%%%%%%%%%%%%%%%%%%%%%%%%%%%%%%

This paper is a first attempt to study the set of closed factors of a finite word. In particular, we investigated the words with the smallest number of closed factors, which we referred to as CR-poor words. We provided a combinatorial characterization of these words and exhibited some relations with rich words.

An enumerative formula for rich words is not known, not even in the binary case. A possible approach to this problem is to separate rich words in subclasses to be enumerated separately. Our enumerative formula for binary CR-poor words given in Proposition \ref{prop:enum} could constitute a step towards this direction.

The set of closed factors could be investigated for specific (finite or infinite) words or classes of words, and could be a tool to derive new combinatorial properties of words.

Finally, the notion of closed factor has recently found applications in string algorithms \cite{BaBa14}, hence a better understanding of the structure of closed factors of a word could lead to some applications.

\bibliographystyle{splncs03}
\bibliography{closed}

\begin{thebibliography}{10}
\providecommand{\url}[1]{\texttt{#1}}
\providecommand{\urlprefix}{URL }

\bibitem{BaBa14}
Badkobeh, G., Bannai, H., Goto, K., I, T., Iliopoulos, C.S., Inenaga, S.,
  Puglisi, S.J., Sugimoto, S.: Closed factorization. In: Proceedings of the
  Prague Stringology Conference 2014. pp. 162--168 (2014)

\bibitem{BrHaNiRe04}
Brlek, S., Hamel, S., Nivat, M., Reutenauer, C.: On the palindromic complexity
  of infinite words. Internat. J. Found. Comput. Sci.  15,  293--306 (2004)

\bibitem{BuDelFi13}
Bucci, M., De~Luca, A., Fici, G.: Enumeration and {S}tructure of {T}rapezoidal
  {W}ords. Theoret. Comput. Sci.  468,  12--22 (2013)

\bibitem{BuDelGlZa09}
Bucci, M., De~Luca, A., Glen, A., Zamboni, L.: A new characteristic property of
  rich words. Theoretical Computer Science  410(30),  2860--2863 (2009)

\bibitem{BuDelDel09}
Bucci, M., de~Luca, A., De~Luca, A.: Rich and {P}eriodic-{L}ike {W}ords. In:
  {DLT 2009}, 13th International Conference on Developments in Language Theory,
  Lecture Notes in Comput. Sci., vol. 5583, pp. 145--155. Springer (2009)

\bibitem{CaDel01a}
Carpi, A., de~Luca, A.: Periodic-like words, periodicity and boxes. Acta
  Inform.  37,  597--618 (2001)

\bibitem{DelFi13}
De~Luca, A., Fici, G.: Open and {C}losed {P}refixes of {S}turmian {W}ords. In:
  Proceedings of the 9th International Conference on Words. Lecture Notes in
  Computer Science, vol. 8079, pp. 132--142. Springer Berlin Heidelberg (2013)

\bibitem{DrJuPi01}
Droubay, X., Justin, J., Pirillo, G.: Episturmian words and some constructions
  of de {L}uca and {R}auzy. Theoret. Comput. Sci.  255(1-2),  539--553 (2001)

\bibitem{Fi11}
Fici, G.: A {C}lassification of {T}rapezoidal {W}ords. In: {WORDS} 2011, 8th
  International Conference on Words. pp. 129--137. No.~63 in Electronic
  Proceedings in Theoretical Computer Science (2011)

\bibitem{GlJuWiZa09}
Glen, A., Justin, J., Widmer, S., Zamboni, L.Q.: Palindromic richness. European
  J. Combin.  30,  510--531 (2009)

\bibitem{Pel13}
Peltom\"{a}ki, J.: Introducing privileged words: {P}rivileged complexity of
  {S}turmian words. Theoret. Comput. Sci.  500,  57--67 (2013)

\bibitem{ReRo09}
Restivo, A., Rosone, G.: Burrows--{W}heeler transform and palindromic richness.
  Theoret. Comput. Sci.  410(30),  3018--3026 (2009)

\bibitem{slo}
Sloane, N.J.A.: The {O}n-{L}ine {E}ncyclopedia of {I}nteger {S}equences.
  Available electronically at \url{http://oeis.org}, {S}equence {A}226452:
  Number of closed binary words of length $n$.

\end{thebibliography}

\end{document}